\let\qedhere\qed
\let\claim\@undefined
\let\endclaim\@undefined
\spnewtheorem{claim}[theorem]{Claim}{\bfseries}{}
\let\qedhere\qed
\DeclarePairedDelimiter\abs{\lvert}{\rvert}%
\newcommand{\size}[1]{\abs{#1}}%
\newcommand{\defterm}[1]{\emph{#1}}%
\DeclarePairedDelimiter\multiset{\{}{\}}
\newcommand{\todom}[1]{\ignorespaces}
\newcommand{\tbn}{}\def\tbn/{TBN}%
\newcommand{\tbns}{}\def\tbns/{TBNs}%
\newcommand{\anoutsideterm}{}\def\anoutsideterm/{an interface}%
\newcommand{\outsideterm}{}\def\outsideterm/{interface}%
\newcommand{\outsidesterm}{}\def\outsidesterm/{interfaces}%
\newcommand{\flip}[1]{#1^{*}}%
\newcommand{\site}[1]{\ifcase#1\or s\or t\or u\else\@ctrerr\fi}%
\newcommand{\sitetype}[1]{\ifcase#1\or a\or b\or c\else\@ctrerr\fi}%
\newcommand{\monomer}[1]{\mathbf{\ifcase#1\or m\or n\or r \else\@ctrerr\fi}}%
\newcommand{\therm}[1]{\mathcal{\ifcase#1\or T\or V\else\@ctrerr\fi}}%
\newcommand{\config}[1]{\ifcase#1\or \gamma\or \delta\else\@ctrerr\fi}%
\newcommand{\polymer}[1]{\ifcase#1\or P\or Q\else\@ctrerr\fi}%
\newcommand{\stablepolycount}{S}%
\newcommand{\class}[1]{{\textrm{#1}}}%
\newcommand{\ccomplete}[1]{#1\textrm{-complete}}%
\newcommand{\cp}{\class{P}}%
\newcommand{\cnp}{\class{NP}}%
\newcommand{\cpnp}{\cp^\cnp}%
\newcommand{\cparallelnp}{\cp^\cnp_{\parallel}}%
\newcommand{\capx}{\class{APX}}%
\newcommand{\cfapx}[1]{#1\textrm{-}\capx}%
\newcommand{\prob}[1]{\textsf{#1}}%
\newcommand{\psat}{\prob{SAT}}%
\newcommand{\pcount}{\prob{SaturatedConfig}}%
\newcommand{\pfree}{\prob{StablyFree}}%
\newcommand{\ptogether}{\prob{StablyTogether}}%
\newcommand{\stablyq}{\prob{Stably}\,Q}%
\newcommand{\ab}{\allowbreak}
\newcommand{\pindmember}{
  \prob{Max} \prob{Ind} \ab \prob{Set} \ab \prob{Member}}%
\newcommand{\pindnotmember}{
  \prob{Max} \prob{Ind} \ab \prob{Set} \ab \prob{Not} \ab \prob{Member}}%
\newcommand{\pindrmember}{
  \prob{Max} \ab \prob{Ind} \ab \prob{Set} \ab \prob{(Not)} \ab \prob{Member}}%
\newcommand{\pcovermember}{
  \prob{Min} \ab \prob{Vertex} \ab \prob{Cover} \ab \prob{Member}}%
\newcommand{\pcovernotmember}{
  \prob{Min} \ab \prob{Vertex} \ab \prob{Cover} \ab \prob{Not} \ab \prob{Member}}%
\newcommand{\pcoverrmember}{
  \prob{Min} \ab \prob{Vertex} \ab \prob{Cover} \ab \prob{(Not)} \ab \prob{Member}}%
\newcommand{\reducesto}{\leq}%
\newcommand{\pflip}[1]{%
  \makebox[0pt][l]{\ensuremath{\smash{\flip{#1}}}}%
  \phantom{#1}}%
\newcommand{\falign}[1]{\makebox(0,0)[c]{#1}}%
\newcommand{\panel}[4]{%
  \setlength{\fboxsep}{#2}%
  \colorbox{#1}{\hspace{#3}#4\hspace{#3}}}%
\newcommand{\spanel}[1]{\panel{white}{0pt}{0pt}{#1}}%
\newcommand{\U}[1]{\falign{\spanel{\ensuremath{\fsite#1}}}}%
\newcommand{\X}[1]{\falign{\spanel{\pflip{\fsite#1}}}}%
\newcommand{\tpanel}[1]{\panel{white}{0pt}{0.4em}{#1}}%
\newcommand{\ftext}[1]{\falign{\tpanel{#1}}}%
\newcommand{\fitext}[1]{\ftext{\itshape{#1}}}%
\newcommand{\from}{\colon}%
\newcommand{\aftertheoremindent}{}
\begin{document}

\title{Computing properties of stable configurations\\
of thermodynamic binding networks}
\author{
  Keenan Breik\inst{1},
  Chris Thachuk\inst{2},
  Marijn Heule\inst{1},
  David Soloveichik\inst{1}
}
\institute{University of Texas at Austin \and California Institute of Technology}
\maketitle{}

\begin{abstract}
  
The promise of chemical computation lies in controlling systems incompatible with traditional electronic micro-controllers, with applications in synthetic biology and nano-scale manufacturing.
Computation is typically embedded in kinetics---the specific time evolution of a chemical system. 
However, if the desired output is not thermodynamically stable, basic physical chemistry dictates that thermodynamic forces will drive the system toward error throughout the computation. 
The thermodynamic binding network (TBN) model was introduced
to formally study how the thermodynamic equilibrium
can be made consistent with the desired computation,
and it idealizes tradeoffs between configurational entropy and binding.
Here we prove the computational hardness
of natural questions about TBNs
and develop a practical algorithm
for verifying the correctness of constructions
by translating the problem into propositional logic
and solving the resulting formula.
The TBN model together with automated verification tools
will help inform strategies for error reduction in molecular computing,
including the extensively studied models of strand displacement cascades
and algorithmic tile assembly.

\end{abstract}

\begin{keywords}
  chemical computation,
  hardness of approximation,
  reduction to SAT.
\end{keywords}

\section{Introduction}

Similar to digital electronics, advances in engineering of molecular computation have relied on a distinctive set of abstractions and models.
The formalism of algorithmic tile assembly~\cite{doty2012theory} enabled the self-assembly of complex nanostructures from simple parts,
with a molecular computational process
(such as simulating binary counting)
directing component placement%
~\cite{barish2009information,ong2017programmable}.
\todom{DS: Ref~\cite{ong2017programmable} to be replaced with Dave/Damien's amazing iterated circuits paper, when it finally gets published.}
Likewise, DNA strand displacement cascades (formalized as~\cite{phillips2009programming}) made it possible to rationally design molecular reaction pathways,
and this model has been used to engineer a wide range of molecular devices, programmable structures, logic and neural circuits, and dynamical systems~\cite{zhang2011dynamic,srinivas2017enzyme,cherry2018scaling}.
The ultimate applications of molecular computation are in contexts where traditional electronics cannot be used. 
These applications include reprogramming biological cell behaviors
or controlling complex nanoscale assembly processes.
It is also hoped that theories of molecular computing can shed light on ill-understood design principles of natural biological regulatory and development pathways.

The widely studied models of chemical computing such as algorithmic tile assembly and strand displacement cascades 
are essentially kinetic as they describe a desired time evolution of an information processing chemical system.
However, unlike electronic computation, 
chemical computation operates in a Brownian environment subject to powerful thermodynamic driving forces.
If the desired output happens to be a meta-stable configuration, then thermodynamic driving forces will inexorably drive the system toward error.
For example, in tile assembly, thermodynamically favored assemblies that are not the intended self-assembly program execution are a major source of error~\cite{SchWin09,barish2009information}. 
Likewise,
spurious production of signal in most strand displacement systems (called leak) occurs because the thermodynamic equilibrium of a strand displacement cascade favors incorrect over the correct output, or does not discriminate between the two~\cite{thachuk2015leakless}.

The thermodynamic binding network (TBN) model abstracts chemical systems at
the thermodynamic equilibrium~\cite{tbn}.
The model is simple and general due to its two main features: 
(1) abstracting away of ``geometry'', 
and (2) the simplification of thermodynamics to a tradeoff between
the number of separate complexes (configurational entropy)
and the number of bonds formed.
These features of the model make it widely applicable, including for  
understanding the consistency of kinetics and thermodynamics for strand displacement cascades, algorithmic tile assembly, as well as other contexts.
The simplicity of the model makes it amenable to rigorous proofs.

The most basic question is how can we distinguish
(thermodynamically) \emph{stable} configurations---%
that is configurations that are favorable in terms of free energy---%
from unstable ones.
Predicting the thermodynamic equilibrium is often computationally difficult: for example,
determining the lowest energy configurations of Ising models~\cite{cipra2000ising}, predicting secondary structure of nucleic-acids with pseudoknots~\cite{lyngso2000rna}, and predicting protein folding~\cite{hart1997robust} were shown to be NP-hard.
These problems derive their hardness from geometrical constraints.
In contrast the TBN model avoids geometry, and the computational complexity originates in the interplay between the opposing forces of increasing binding and increasing the number of separate complexes.
Note, however, that even without configurational entropy, there are interesting computational problems derived solely from geometry-free binding%
~\cite{jonoska2011stoichiometry}.%

In the TBN model,
the hard part of checking whether a given configuration is stable
is determining whether the system can reconfigure
to increase the number of separate complexes,
thereby increasing configurational entropy,
without reducing the total number of bonds.
We prove that this problem is NP-complete in the worst case.
We strengthen this further:
letting $S$ denote the number of separate complexes in a stable configuration,
we show that computing $S$
is not in $\cfapx{n^{\delta}}$ for any $\delta < 1$ unless $\cp = \cnp$
(that is, $S$ is hard to approximate).

Applications of TBNs in molecular computing
also motivate more complex questions about stable configurations.
In DNA strand displacement cascades,
output is usually represented by the release of a previously bound DNA strand.
The question of whether releasing the output is thermodynamically favorable
corresponds to the problem of deciding whether a given strand is free
in some stable configuration of the TBN model ($\pfree$).
In other contexts,
the output is represented by whether two specific molecules
are bound in the same complex,
and we want to determine whether there is a stable configuration
with that property ($\ptogether$).
By connecting these problems to a graph problem
concerning membership in a minimum vertex cover,
we show that these problems are complete for $\cparallelnp$
(which is $\cp$ with parallel access to an $\cnp$ oracle),
a complexity class of growing importance~%
\cite{kadin1989pnp,fal2009richer}.

Despite these worst-case negative results,
we develop a software package accompanying this paper
that can answer many questions in practice,
with functionality for finding stable configurations,
and finding stable configurations with specific properties
including $\pfree$ and $\ptogether$%
~\cite{solver}.
To compute $S$,
the package computes a non-trivial reduction
to the Boolean satisfiability problem (SAT)
which is then passed to a SAT solver.
We show that an exponential speed up can be achieved in certain cases, compared with a naive solution based on enumerating all configurations, or even the subset of maximally bound (saturated) configurations.
Although $\pfree$ and $\ptogether$
are more complex questions about stable configurations,
we show that these problems can be reduced
to computing $S$ on a transformed TBN.
Our package assists in manipulating and understanding
the behavior of TBNs.

Our ultimate goal is molecular computation in which the thermodynamic equilibrium is consistent with the desired computation, and that it can be reached by an efficient kinetic pathway.
The TBN model permits differentiating thermodynamic and kinetic contributions to the computational power of the chemical system, and by ensuring that both are consistent we can ensure greater fidelity of molecular computing.
Note that although we show that NP-hard problems can be encoded in TBNs, we in no way claim that TBNs by themselves provide an effective physical mechanism for solving NP-hard problems.
Indeed, without a separate kinetic argument as that given by strand displacement or tile assembly, there are no guarantees on the time required to approach thermodynamically favored states.

\section{Model}

TBNs consider chemical systems affected by two (often opposing) thermodynamic driving forces:
the free-energy benefit due to forming additional bonds,
and the penalty for separate molecules joining together (free-energy of association). 
TBNs focus on these two driving forces due to their wide applicability, as well as the existence of systematic ways to amplify their strength relative to other driving forces~\cite{tbn}.
A further aspect of the model that makes it inclusive is that it does not rely on geometric constraints to enforce correct behavior.
Thus, in the TBN model, we think of a monomer
as simply being an unstructured collection of binding sites.

\newcommand{\sitetypeset}{D}%
\newcommand{\sitetypelit}[1]{\sitetype#1}%
Formally,
a \tbn/ is a triple $(\sitetypeset, {\flip{}}, \therm1)$
where
\begin{itemize}
  \item
    $\sitetypeset$ is a finite set,
    which we call the set of \defterm{site type}s.
    These represent specific binding motifs, 
    with the prototypical example of DNA ``domains''---sequences that are designed to bind as a unit. 

  \item
    $\flip{} \from \sitetypeset \to \sitetypeset$
    is an involution (its own inverse)
    with no fixed point
    ($\flip{\sitetypelit1} \neq \sitetypelit1$ for all $\sitetypelit1$).
    This way the \defterm{complement}
    of the site type
    $\flip{\sitetypelit1}$
    is $(\flip{\sitetypelit1})\flip{{}} = \sitetypelit1$.
    The prototypical example is that of DNA,
    where two sequences are complementary
    if they are Watson-Crick complements of each other.%
    \footnote{
      We exclude DNA reverse complement palindromes like AAGAATTCTT,
      which can self-hybridize.}

  \item
    $\therm1$ is a finite multiset of monomer types,
    and a \defterm{monomer type}
    is a finite multiset over $\sitetypeset$.
    This models that
    chemical systems consist of macromolecules
    and that
    macromolecules may have multiple binding sites.
\end{itemize}

We often call $\therm1$ alone a \tbn/
when $\sitetypeset$ and $\flip{}$ are to be inferred.
\newcommand{\extherm}{\therm1_\mathrm{ex}}%
For example,
given the following multiset of monomer types,
\begin{equation}
  \extherm =
  \multiset{
    \multiset{\sitetypelit1, \flip{\sitetypelit2}},
    \multiset{\sitetypelit1, \flip{\sitetypelit2}},
    \multiset{\flip{\sitetypelit1}, \flip{\sitetypelit1}, \sitetypelit2}
  },
\end{equation}
we can infer that $D =\{a, b, \flip{a}, \flip{b}\}$
and that $\flip{}$ maps $a$ to $\flip{a}$, $b$ to $\flip{b}$ and vice versa.
We call an occurrence of a site type a \defterm{site},
and an occurrence of a monomer type a \defterm{monomer}.
So $\extherm$ has
seven sites but just four site types,
and it has three monomers but just two monomer types.
A site (type) is \defterm{limiting}
if its complement occurs at least as often.
So $\sitetypelit1$, $\flip{\sitetypelit1}$, and $\sitetypelit2$
are limiting in $\extherm$.

We use a bold variable like $\monomer1$ to indicate a monomer
and an italic variable like $\site1$ to indicate a site.

\begin{figure}[h]
  \newcommand{\fsite}[1]{\ifcase#1\or a\or b\else\@ctrerr\fi\vphantom{bg}}%
  \newcommand{\fsa}{a}%
  \newcommand{\fsb}{b}%
  \newcommand{\Tsat}{\fitext{saturated}}%
  \newcommand{\Tstab}{\fitext{stable}}%

  \def\svgwidth{\columnwidth}
  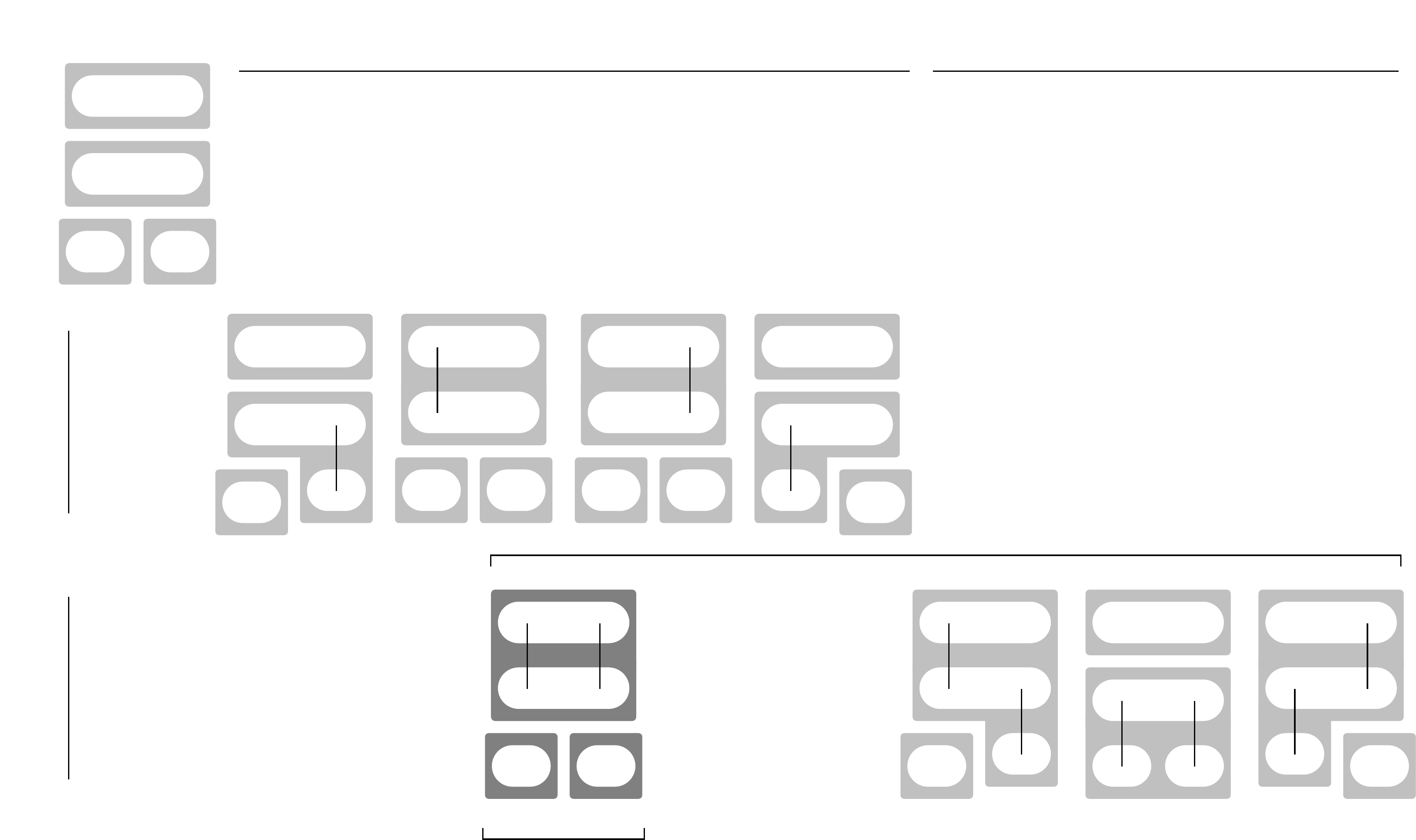

  \caption{
    All nine configurations of the TBN
    $\therm1 = \{\{\fsa\}, \{\fsb\}, \{\fsa, \fsb\}, \{\flip{\fsa}, \flip{\fsb}\}\}$.
    We can infer the set of site types to be
    $\sitetypeset = \{a, b, \flip{a}, \flip{b}\}$,
    and we can infer that $\flip{}$ maps $a$ to $\flip{a}$
    and $b$ to $\flip{b}$ and vice versa.
    An edge between sites indicates that they pair in that configuration.
    A shaded box indicates a polymer.}
  \label{configs}
\end{figure}

As Figure~\ref{configs} illustrates,
a \defterm{configuration} $\config1$ of a \tbn/
is a matching among
its complementary sites.
Two sites \defterm{pair} in $\config1$ if they are matched.
Two monomers \defterm{bind} in $\config1$ if some of their sites pair.
A \defterm{polymer} of $\config1$ is a connected component
with respect to binding.
The configuration $\config1$ is \defterm{saturated}
if the matching is maximal.
$\config1$ is \defterm{stable} if it is saturated
and no saturated configuration has more polymers.
The number of polymers in $\config1$ is $\size{\config1}$.

All else being equal,
a state with more bonds formed is more favorable,
and a state with more separate polymers is more favorable.%
\footnote{Intuitively, the entropic benefit is due to additional microstates,
each describing the three-dimensional position of each polymer,
associated with such a state.
See \cite{tbn} for additional physical intuition.}
Thus ``thermodynamic stability'' is equated with
``maximizing the number of molecular bonds and the number of separate polymers''
subject to some prescribed trade-off between the two.
Although the general case of quantitative trade-off is complex,
\tbns/ take the limiting case
in which maximizing the number of bonds is infinitely preferred.
Importantly, many systems studied in molecular programming
operate in this limit:
in particular, DNA systems with long domains (strong binding sites)
at relatively low concentrations~\cite{thachuk2015leakless}.%
\footnote{
  We can systematically approach this limit in two steps.
  First reduce concentration just enough
  to make entropic forces significant.
  Then make domains long enough that binding dominates.%
}

The TBN model abstracts a volume with one copy each of the given monomers.
To handle multiple copies, each copy must be explicitly included. 
Certain results that hold in the single copy setting
could be generalized to a ``bulk'' setting
with a variable count of each monomer type,
although this is an area for future work (see Section~\ref{conclusion}).

\section{Examples} \label{sec:ex}

The following examples
illustrate the kinds of analysis the TBN model allows,
motivated by the verification of different chemical systems.

{\begin{figure}[t]
  \newcommand{\fsite}[1]{\ifcase#1%
  \or a\or b\or c\or d\or e\or f\or g%
  \else\@ctrerr\fi\vphantom{bg}}%

  \subfloat[]{
    \def\svgwidth{0.46\textwidth}
    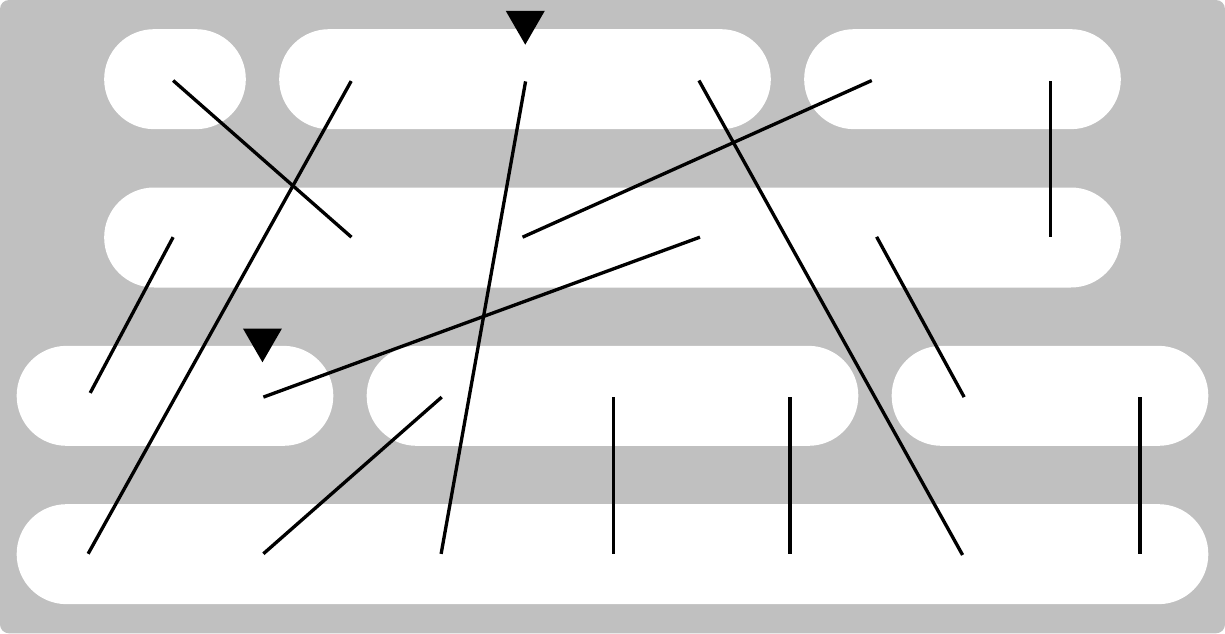
  }
  \hspace*{\fill}
  \subfloat[]{
    \def\svgwidth{0.46\textwidth}
    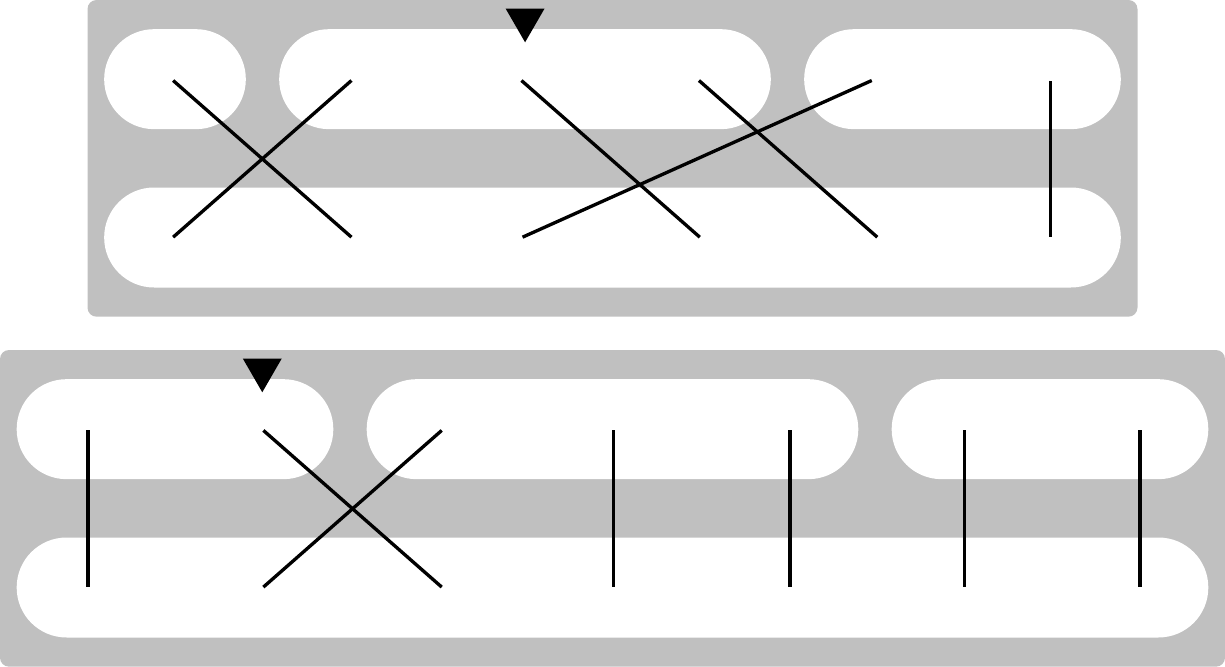
  }

  \caption{
    (a) A stable configuration of a TBN $\therm1_a$
    with a single polymer.
    Every stable configuration of $\therm1_a$
    has a single polymer.
    (b) A stable configuration of another TBN $\therm1_b$.
    The only difference between $\therm1_a$ and $\therm1_b$
    is the two sites marked by triangles swapping.
    With this subtle difference,
    $\therm1_b$ can fall apart into two polymers.
    }
  \label{fig:split}
\end{figure}
}

A core part of chemical and molecular machinery
is assembling stable molecular structures.
For example, biology relies
on molecular structures like motors and capsids.
Similarly, in engineering,
molecular self-assembly techniques design building blocks
that accrete into large structures.
A basic question of self-assembly
is will a desired structure hold together and how can we know?

Distinguishing stable from unstable assemblies can be challenging. 
Figure~\ref{fig:split} shows two TBNs,
each consisting of nearly the same collection of monomers,
but there is a sharp difference in their behavior.
One may fall apart into two polymers in a stable configuration,
while every stable configuration of the other holds together
as a single polymer.
Can we tell the two cases apart
without enumerating exponentially many configurations?

As a more complex example of molecular computation,
consider molecular logic circuits proposed in prior work%
~\cite{chalk2018thermodynamically}.
Figure~\ref{fig:circuit} shows a Boolean sorting network
of 10 logic gates implemented with 97 monomers.
The inputs are encoded in an ``input'' monomer.
Each output is represented by which of two ``output'' monomers
(representing 0 and 1 respectively)
is bound in the same polymer as the input.
Although the details of the construction are beyond the scope of this paper,
intuitively each logic gate corresponds to a set of monomers,
with a monomer for each entry of the truth table
(for example, $\{1_a, 0_c, 1_i^*, 1_j^*\}$ corresponding to the OR
with input wires $a=1$, $c=0$ and output wires $i=j=1$).
The monomers corresponding to the correct computation on the given inputs
bind in the polymer with the input monomer.
Additional ``cap'' monomers (for example, $\{1_a^*, 0_c^*\}$)
and ``wrap-around'' site types ($w$ and $w^*$)
non-trivially ensure that correct computation is enforced in the stable configuration.
To verify correctness,
we would like to algorithmically confirm
that there is no stable configuration such that the wrong output monomer
is bound in the same polymer as the input.
Note that this TBN has about $10 ^ {61}$ saturated configurations.
Even checking a billion configurations per second,
checking all of them would take $10 ^ {44}$ years.

{\begin{figure}[t]
  \centering
  \includegraphics[width=1.0\textwidth]{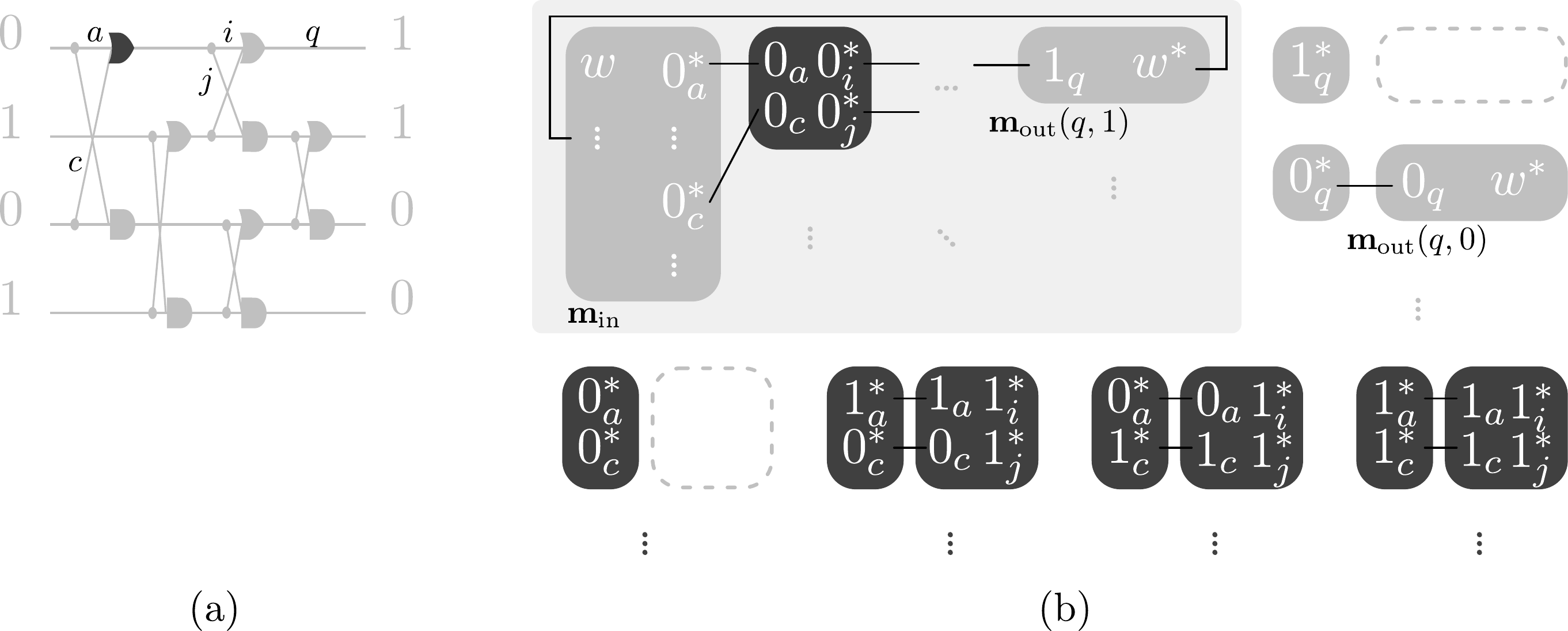}
  \caption{
    (a) A Boolean sorting network with four inputs
    and (b) a stable configuration of a TBN that simulates it.
    The full construction for arbitrary circuits is given in prior work~%
    \cite{chalk2018thermodynamically}.
    Each logic gate in (a) corresponds to a set of monomers in (b).
    For example,
    the black OR gate in (a)
    corresponds to the black monomers in (b).
    The $\monomer1_{\text{in}}$ monomer encodes the inputs to the circuit.
    The $\monomer1_{\text{out}}$ monomers correspond
    to possible outputs of the circuit.
    In a particular configuration,
    the computed output corresponds to which of the $\monomer1_{\text{out}}$
    monomers are in the same polymer as $\monomer1_{\text{in}}$.
    In the example configuration in (b),
    $\monomer1_{\text{out}}(q,1)$
    is in the same polymer as the input monomer,
    so the output on wire $q$ is $1$.
    The construction ensures that every stable configuration
    (for example, the configuration shown in (b)) computes the correct output.
  }
  \label{fig:circuit}
\end{figure}
}

Beyond theoretical constructions,
DNA strand displacement cascades have been widely used
to realize complex molecular reaction pathways in the laboratory%
~\cite{zhang2011dynamic}.
In a strand displacement cascade,
computation is carried out by combinations of DNA strands.
When certain input strands are present,
they can bind to DNA complexes and displace strands of the complex.
These newly displaced strands act as input strands for other complexes.
The cascade of displacements that follows can free certain output strands.
Which are free encodes the output of the computation.

We can use a TBN to confirm that the release of the output strand
incurs a configurational entropy penalty unless the correct input is present.
Figure~\ref{fig:abcd} shows two stable configurations
of a TBN that models a strand displacement cascade computing an AND of two inputs:
the output strand can be free in a stable configuration
if and only if both inputs are present.
This construction was introduced in prior work,
but the TBN model gives us a way of precisely articulating
the combinatorial penalty to falsely releasing the output%
~\cite{thachuk2015leakless,tbn}.%
\footnote{Strand displacement systems typically include weakly-binding domains known as toeholds,
which are responsible for fast initiation of the strand displacement reaction. 
Toehold binding is weak by design to avoid sequestering input strands in incorrect complexes.
Because toeholds are not necessarily expected to be bound in thermodynamically favored configurations, they fall outside our TBN abstraction (toehold domains are not shown in Fig.~\ref{fig:abcd}).
Nonetheless, since toeholds can substantially change the thermodynamic equilibrium,
more comprehensive arguments should ensure that the configurational entropy penalty is not offset by binding of additional toeholds.
This analysis depends on the specific strand displacement system in question and is outside the scope of this work.
}

In the remainder of the paper,
we formally define and resolve questions about TBNs such as these.
In Sections \ref{satconfig} and~\ref{stablyfree}
we focus on analyzing their computational complexity.
In Sections \ref{solver} and~\ref{solverblackbox} we develop a method
to answer them via a reduction to SAT.
The tool we develop handles the examples in this section
and confirms the answers we have discussed here.

{\begin{figure}[t]
  \renewcommand{\U}[1]{\falign{\spanel{\ensuremath{#1\vphantom{bg}}}}}%
  \renewcommand{\X}[1]{\falign{\spanel{\ensuremath{\pflip{#1}\vphantom{bg}}}}}%

  \quad
  \subfloat[]{
    \def\svgwidth{0.35\textwidth}
    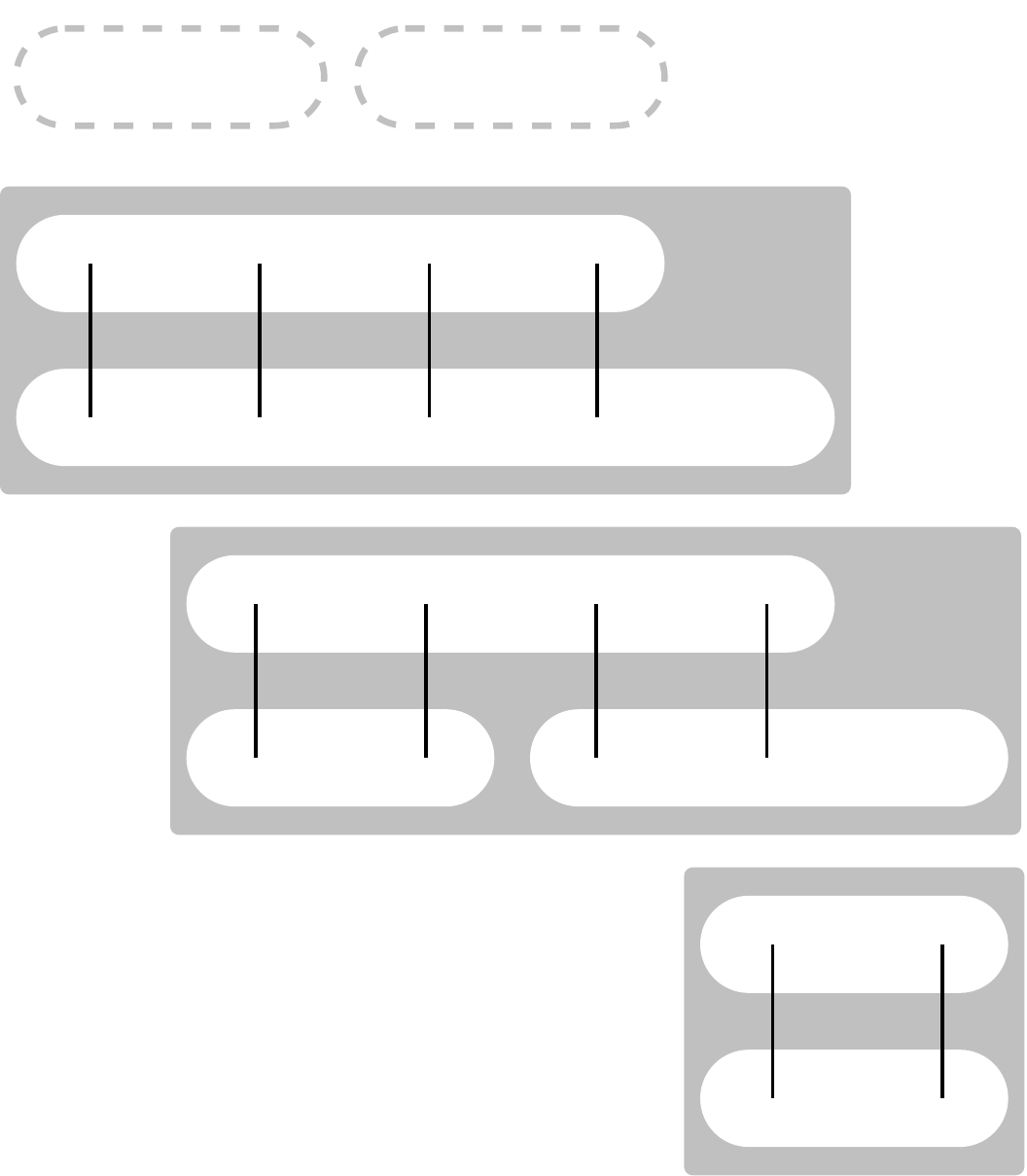
  }
  \hspace*{\fill}
  \subfloat[]{
    \def\svgwidth{0.35\textwidth}
    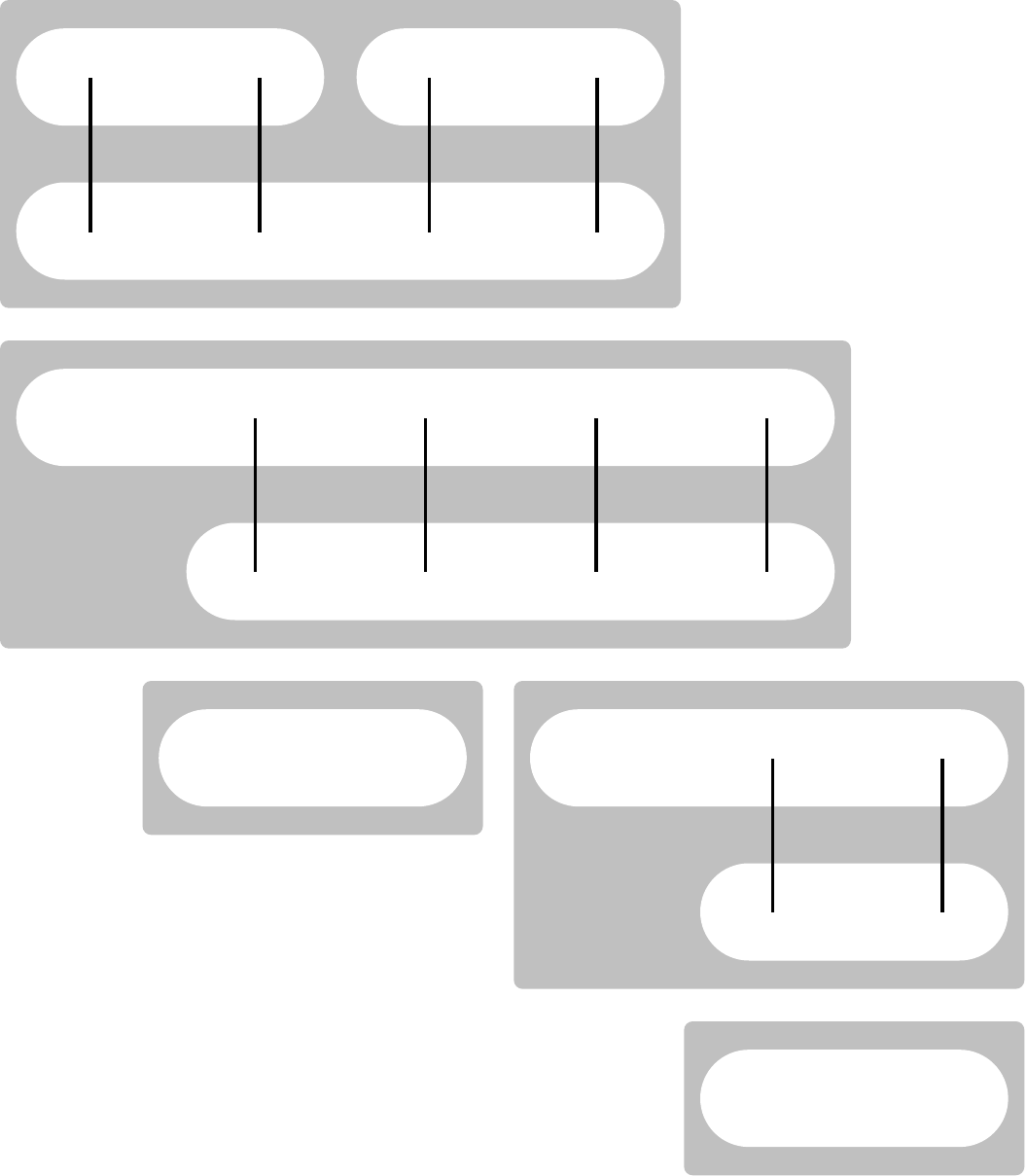
  }
  \quad
  \quad
  \quad

  \caption[]{A TBN for a strand displacement cascade module
  that behaves as an AND gate.
  (a) The gate without the input monomers.
  This is the unique stable configuration.
  The output monomer remains bound.
  (b) The same gate with the input monomers added.
  The configuration illustrated now becomes
  an additional stable configuration,
  and it has the output monomer free.
  A kinetic pathway that leads from the inputs being present
  to the outputs being free would correspond to
  flipping the bonds in a cascade from top to bottom.
  Note that when exactly one of the input monomers is present (not shown),
  the output monomer still is not free in any stable configuration.
  }
  \label{fig:abcd}
\end{figure}

}

\section{$\pcount$} \label{satconfig}

Stable configurations are of central interest,
so we need to be able to identify them.
More generally, we ask what is the maximum number of separate polymers achievable in a saturated configuration of a TBN, which we call $\stablepolycount(\therm1)$.
Knowing this quantity,
we can determine whether a given configuration is stable
by checking whether it has $\stablepolycount(\therm1)$ polymers.
We will see in later sections that other questions about TBNs
can be reduced to calculating $\stablepolycount(\therm1)$ as well.

\begin{definition}
  $(\therm1, k)$ is in $\pcount$ if
  some saturated configuration of the \tbn/ \hspace{0.2em}$\therm1$
  has at least $k$ polymers.
  $\stablepolycount(\therm1)$ is the greatest such $k$.
\end{definition}

\aftertheoremindent
In this section,
we prove that $\pcount$ is $\ccomplete{\cnp}$
and even that $\stablepolycount$ is hard to approximate.
To formalize our analysis,
we establish the encoding of a TBN as follows.
First, the encoding of a monomer
is a sequence of its sites (arbitrarily ordered).
Then the encoding of a TBN
is a sequence of the encoding of its monomers (also arbitrarily ordered).

\begin{claim} \label{satconfignp}
  $\pcount$ is in $\cnp$.
\end{claim}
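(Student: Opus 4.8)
The plan is to place $\pcount$ in $\cnp$ by exhibiting a polynomial-time nondeterministic verifier that guesses a configuration and checks the three defining properties. Given an instance $(\therm1, k)$, the certificate I would use is a configuration $\config1$ itself, written out as an explicit list of paired sites. Since a configuration is a matching among the sites of $\therm1$, and the total number of sites is linear in the length of the encoding (each site is listed exactly once when the monomers are written out), the matching has at most half as many edges as there are sites. Recording each edge as a pair of site indices therefore costs only polynomially many bits, so the certificate has polynomial size in the input.

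Next I would describe what the verifier checks, namely that $\config1$ is (i) a valid configuration, (ii) saturated, and (iii) has at least $k$ polymers. For (i), I would make a single pass over the listed pairs and confirm that every site occurs in at most one pair and that the two sites of each pair are complementary (one of some type $\alpha$, the other of type $\flip{\alpha}$); this is clearly polynomial. For (iii), I would build the binding graph on the monomers of $\therm1$ — joining two monomers by an edge whenever some of their sites are paired in $\config1$ — and count its connected components by a standard graph traversal, then compare the count against $k$. The binding graph has size polynomial in the input, so this step is polynomial as well.

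The only step requiring a small idea is (ii), verifying that the matching $\config1$ is maximal, and I expect this to be the main (though mild) obstacle: read naively, ``no saturated configuration'' or ``the matching is maximal'' suggests an exponential search over ways to extend $\config1$. The contribution here is to observe that maximality collapses to a purely local condition. A matching is maximal exactly when no edge has both endpoints unmatched, and in our setting an edge joins two complementary sites; hence $\config1$ is saturated if and only if the set of sites it leaves unpaired contains no two complementary sites. This is checkable in polynomial time by tallying, for each site type $\alpha$, the number of unpaired instances of $\alpha$ and of $\flip{\alpha}$, and requiring that at least one of the two counts be zero for every type. With checks (i)--(iii) all running in polynomial time and the certificate of polynomial size, the verifier accepts precisely when the guessed $\config1$ witnesses that some saturated configuration has at least $k$ polymers, so $\pcount$ is in $\cnp$.
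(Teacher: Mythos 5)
Your proposal is correct and takes essentially the same approach as the paper: the certificate is the saturated configuration itself, and the verifier checks it and counts connected components of the binding graph in polynomial time. The only difference is that you make explicit the polynomial-time saturation check (maximality of the matching reduces to the local condition that no two complementary sites are both left unpaired), a point the paper folds into ``we can easily verify this'' and later echoes in its separate claim that a configuration is saturated iff every limiting site is paired.
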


\begin{proof}
  Consider a TBN $\therm1$ and an integer $k$.
  Suppose $\therm1$ has a saturated configuration $\config1$
  with at least $k$ polymers.
  Then $\config1$ is a certificate of polynomial size.
  Check that $\config1$ is saturated
  by checking that every two unpaired sites
  are not complements.
  Check that $\config1$ has $k$ polymers
  by counting the connected components.
  \qedhere
\end{proof}

\newcommand{\pexactcover}{\prob{ExactCover}}%
\aftertheoremindent
To show hardness we reduce from the decision problem $\pexactcover$,
one of Karp's original $\ccomplete{\cnp}$ problems.

\newcommand{\universe}{U}%
\newcommand{\bin}{B}%

\begin{definition}
  A set
  $X$
  of sets
  is in $\pexactcover$ if
  some subset of $X$
  partitions the universe $\bigcup X =
  \{ y \in x : x \in X \}$.
  Such a subset is called an \emph{exact cover}.
\end{definition}

\newcommand{\ctbn}[2][]{{\therm1}_{#1}(#2)}%
\newcommand{\muni}{\monomer1 _ \universe}%
\newcommand{\mrest}{\monomer1 _ {\bin \setminus \universe}}%

\aftertheoremindent
The reduction relies on a transformation
from an instance $X$ of $\pexactcover$ to a TBN $\ctbn{X}$,
which Figure~\ref{fig:trans} illustrates.
Let $\universe = \bigcup X$ be the universe and
$\bin = \biguplus X$ be the multiset sum of the sets in $X$.
Then $\ctbn{X}$ has three kinds of monomers:
$\monomer1_x = x$ for each set $x$ in $X$,
$\muni = \{\flip{\site1} : \site1 \in \universe\}$, and
$\mrest = \{\flip{\site1} : \site1 \in \bin \setminus \universe\}$.%
The number of polymers in a stable configuration of $\ctbn{X}$
depends on whether $X$ is a yes or no instance.

\begin{claim} \label{thm:Sj}
  \begin{equation}
    \stablepolycount(\ctbn{X}) =
    \begin{cases}
      2 &\text{if $X$ is in $\pexactcover$}\\
      1 &\text{otherwise}\\
    \end{cases}
  \end{equation}
\end{claim}

\begin{proof}
  The $\monomer1_x$ together
  perfectly complement
  $\muni$ and $\mrest$ together.
  So in a saturated configuration $\config1$,
  each $\monomer1_x$
  is completely bound
  to one or both of
  $\muni$ and $\mrest$.
  So $\config1$ has one or two polymers.

  Suppose $X$ has an exact cover $C$.
  Then $C$ partitions $\universe$.
  So bind $\monomer1_x$ to $\muni$
  for all $x$ in $C$.
  And bind $\monomer1_x$ to $\mrest$
  for all $x$ not in $C$.
  This produces a saturated configuration with two polymers.

  Conversely,
  suppose $\config1$ is a saturated configuration with two polymers.
  Then each $\monomer1_x$ is completely bound
  to $\muni$
  or completely bound to $\mrest$.
  Those $\monomer1_x$ bound to $\muni$
  then partition $\universe$
  and constitute an exact cover.
  \qedhere
\end{proof}

{\begin{figure}[t!]
  \newcommand{\da}[1]{{#1}_{1}}%
  \newcommand{\db}[1]{{#1}_{2}}%

  \hspace*{\fill}
  \subfloat[]{
    \centering
    \newcommand{\fsite}[1]{\ifcase#1\or a\or b\or c\else\@ctrerr\fi\vphantom{by}}%

    \def\svgwidth{0.25\columnwidth}
    \newcommand{\scm}[1]{\makebox[0pt][c]{$\monomer1_{#1}$}}%
    \newcommand{\slm}[1]{\makebox[0pt][l]{$\monomer1_{#1}$}}%
    \newcommand{\slmu}{\makebox[0pt][l]{$\muni$}}%
    \newcommand{\slmb}{\makebox[0pt][l]{$\mrest$}}%
    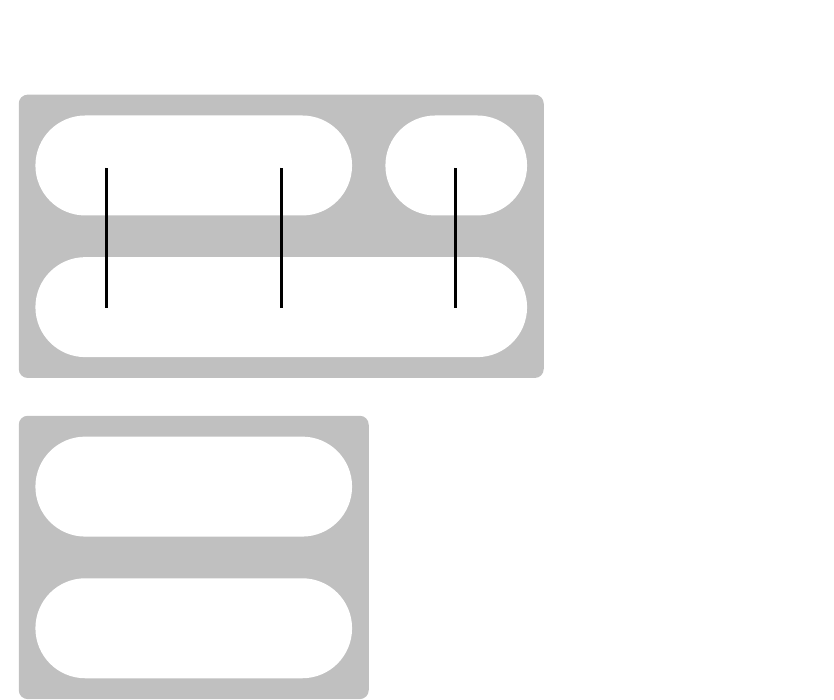
  }
  \hspace*{\fill}
  \subfloat[]{
    \renewcommand{\U}[1]{\falign{\spanel{\ensuremath{#1\vphantom{bg}}}}}%
    \renewcommand{\X}[1]{\falign{\spanel{\ensuremath{\pflip{#1}\vphantom{bg}}}}}%

    \def\svgwidth{0.35\columnwidth}
    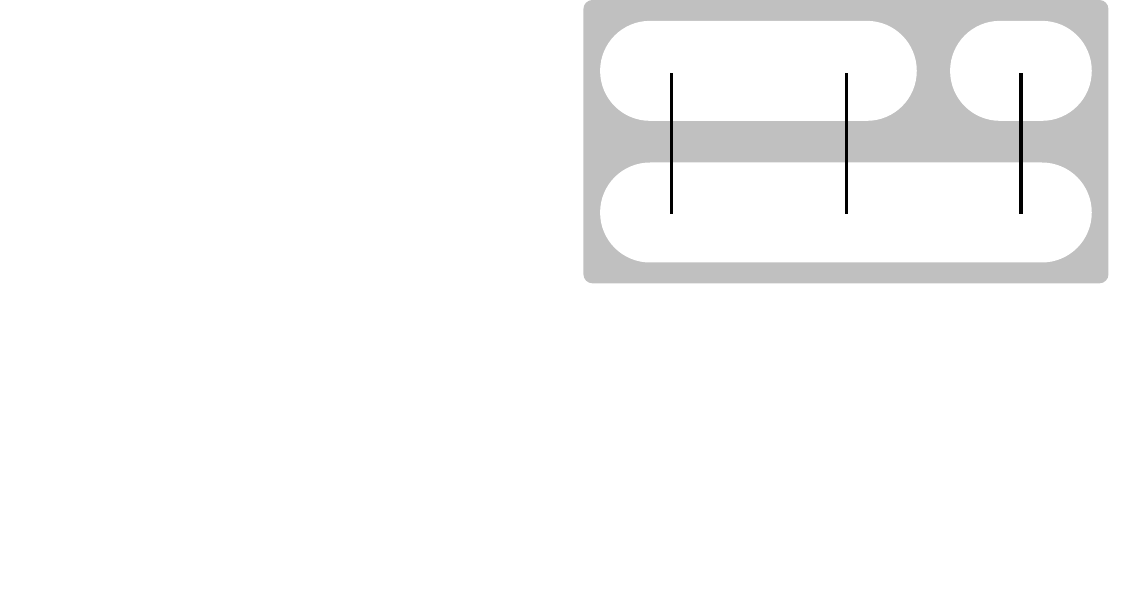
  }
  \hspace*{\fill}
  \caption{
    The transformation of the instance
    $X = \{\{a, b\}, \{b, c\}, \{c\}\}$ of $\pexactcover$
    into the TBNs $\ctbn{X}$ and $\ctbn[3]{X}$.
    (a) A saturated configuration of $\ctbn{X}$ with two polymers,
    showing that $X$ has an exact cover $\{\{a, b\}, \{c\}\}$.
    (b) A saturated configuration of $\ctbn[3]{X}$ with three polymers
    showing the same exact cover.
    }
  \label{fig:trans}
\end{figure}

}

\aftertheoremindent
This is enough to prove that $\pcount$
is $\ccomplete{\cnp}$,
but to prove a strong hardness of approximation result for $\stablepolycount$,
we scale the gap between the two cases
up to a factor of $j$.
To do so,
we transform $X$ into $\ctbn[j]{X}$
as Figure~\ref{fig:trans} illustrates.
We combine $j - 1$ indexed copies of $\ctbn{X}$ into a single TBN
and then merge
the $j - 1$ indexed copies of $\mrest$ into a single monomer.
This way $\ctbn{X}$ is the same as $\ctbn[2]{X}$,
and in general we have the following.

\begin{claim}
  \label{thm:STj}
  \begin{equation}
    \stablepolycount(\ctbn[j]{X}) =
    \begin{cases}
      j &\text{if $X$ is in $\pexactcover$}\\
      1 &\text{otherwise}\\
    \end{cases}
  \end{equation}
\end{claim}

\aftertheoremindent
Bonds form only within, not between, copies in $\ctbn[j]{X}$.
So Claim~\ref{thm:STj} holds by reasoning similar to
the proof of Claim~\ref{thm:Sj}.

We now see that $\stablepolycount$ is hard to approximate.
We define a one-sided approximation
in the way typical of maximization problems.

\begin{definition}
  $A$ is a $\rho$ factor approximation for $f$
  if $f(x) / \rho \leq A(x) \leq f(x)$
  for all $x$.
\end{definition}

\aftertheoremindent
A trivial $n$ factor approximation algorithm for $S$
is to always output~1.
This trivial algorithm turns out to be optimal
among polynomial factor approximations.

\begin{claim}
  No $n^{\delta}$ factor approximation algorithm
  for $\stablepolycount$
  runs in time polynomial in $n$
  for any $\delta < 1$
  unless $\cp = \cnp$,
  where $n$ is the number of monomers.
\end{claim}

\begin{proof}
  \newcommand{\apxalg}{A}%
  To the contrary,
  suppose $\apxalg$ is such an algorithm,
  and consider a set $X$ of $m$ sets.
  Then $\ctbn[j]{X}$ has
  $n = (m + 1)(j - 1) + 1 < 2 m j$ monomers.
  So choose $j > (2 m) ^ \frac{\delta}{1 - \delta}$.
  Raise both sides to $1 - \delta$
  and rearrange
  to see that $j > (2 m j) ^ {\delta} > n ^ \delta$.
  If $X$ is not in $\pexactcover$,
  then $\stablepolycount(\ctbn[j]{X}) = 1$.
  So
  \begin{align}
    \apxalg(\ctbn[j]{X})
      \leq {}& \stablepolycount(\ctbn[j]{X})
      = 1
    .
    \intertext{If $X$ is in $\pexactcover$,
    then $\stablepolycount(\ctbn[j]{X}) = j$.
    So
    }
    \apxalg(\ctbn[j]{X})
    \geq {}& S(\ctbn[j]{X}) / n ^ \delta
    {} = j / n ^ \delta
    \makebox[0pt][l]{${} > 1$.}
  \end{align}
  So $X$ is in $\pexactcover$ if and only if
  $\apxalg(\ctbn[j]{X}) > 1$.
  \qedhere
\end{proof}

\newcommand{\pPair}[2]{\operatorname{Pair}(#1, #2)}%
\newcommand{\pColor}[2]{\operatorname{Color}(#1, #2)}%
\newcommand{\pSum}[2]{\operatorname{Sum}(#1, #2)}%
\newcommand{\pBind}[2]{\operatorname{Bind}(#1, #2)}%
\newcommand{\pRep}[1]{\operatorname{Rep}(#1)}%

\section{Computing $\pcount$} \label{solver}

A hallmark property of an $\ccomplete{\cnp}$ decision problem is that
some instances will be hard to solve unless $\cp = \cnp$.
However,
there are still interesting instances that can be solved efficiently.
Since many real-world problems are $\ccomplete{\cnp}$,
various approaches have been developed
to perform well on interesting instances.
In this section,
we apply such an approach.
We encode the $\pcount$ problem
into $\psat$, the Boolean satisfiability problem,
which allows a $\psat$ solver to find a solution
that we can decode to obtain a saturated configuration.
By querying $\pcount$ for different $k$
we can compute $\stablepolycount(\therm1)$
as well as obtain a stable configuration.

\subsection{The Boolean satisfiability Problem}%

\begin{definition}
  A formula $\phi$ in propositional logic is called \emph{satisfiable}
  iff there exists an assignment to the Boolean variables in $\phi$ such
  that the formula evaluates to true. The Boolean satisfiability problem
  asks whether a given formula $\phi$ is satisfiable.
\end{definition}

\aftertheoremindent
A $\psat$ solver is a tool
that determines whether a formula has a satisfying assignment.
In the last two decades,
$\psat$ solvers have become powerful enough
to efficiently solve interesting instances of hard problems.
The approach has been successful in areas such
as hardware and software verification~\cite{BMC,Copty2001,SBMC}.

To use this approach to solve $\pcount$,
we will translate a \tbn/ $\therm1$ and an integer $k$
into a CNF formula $\phi$
such that satisfying assignments of $\phi$
correspond to
saturated configurations of $\therm1$ with at least $k$ polymers.
Recall that a CNF (conjunctive normal form) formula
is a conjunction of disjunctions,
such as $(x \lor \neg y) \land (x \lor z) \land z$.

\newcommand{\atmost}[2]{{\leq}_{#1} #2}%
\newcommand{\atleast}[2]{{\geq}_{#1} #2}%
\newcommand{\atmostimpl}[2]{\operatorname{LT}_{#1}(#2)}%
\newcommand{\atleastimpl}[2]{\operatorname{GT}_{#1}(#2)}%

\subsection{Encoding saturated configurations}

We construct a formula
where satisfying assignments correspond to
saturated configurations of a \tbn/ $\therm1$.
The formula uses a Boolean variable $\pPair{\site1}{\site2}$ for each pair of
complementary sites $\site1$ and $\site2$.
Assigning $\pPair{\site1}{\site2}$ true
will mean that $\site1$ and $\site2$ are paired.
Otherwise they are unpaired.
Note that pairing is symmetric,
so $\pPair{\site1}{\site2}$ and $\pPair{\site2}{\site1}$
are the same variable.

\newcommand{\compsites}[1]{\ensuremath{C(#1)}}%
In order to encode a valid configuration,
we add the constraint
\begin{equation}
  \atmost{1}{ \{\pPair{\site1}{\site2} : \site2 \in \compsites{\site1}\} }
\end{equation}
to the formula for each site $\site1$,
where $\compsites{\site1}$ is the sites in $\therm1$
complementary to $\site1$.
The direct encoding of $\atmost{1}$ (read ``at most one'')
includes a binary clause
$\lnot \pPair{\site1}{\site2} \lor  \lnot \pPair{\site1}{\site3}$
for each two sites $\site2$ and $\site3$ in $\compsites{\site1}$.
The number of such clauses
is quadratic in the size of $\compsites{\site1}$.
Although there do exist encodings that consist
of only a linear number of binary clauses~\cite{Sinz},
due to the limited size of $C(s)$ in our test suite
we found that the direct encoding works best.

\begin{claim}
  A configuration is saturated iff every limiting site is paired.
\end{claim}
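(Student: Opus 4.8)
The plan is to prove the biconditional by establishing each implication through its contrapositive, the crux being a simple counting argument. Fix a complementary pair of site types $\sitetype1$ and $\flip{\sitetype1}$. In any configuration, every edge incident to a $\sitetype1$ site goes to a $\flip{\sitetype1}$ site and vice versa, so the number of paired $\sitetype1$ sites equals the number of paired $\flip{\sitetype1}$ sites. This single observation drives both directions.

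For the forward direction I would assume $\config1$ is saturated and show every limiting site is paired. Suppose toward contradiction that some site $\site1$ of a limiting type $\sitetype1$ is unpaired. Then the number of paired $\sitetype1$ sites is strictly less than the total number of $\sitetype1$ sites. By the counting observation the number of paired $\flip{\sitetype1}$ sites equals the number of paired $\sitetype1$ sites, hence is also strictly less than the number of $\sitetype1$ sites, which, since $\sitetype1$ is limiting, is at most the number of $\flip{\sitetype1}$ sites. Therefore some $\flip{\sitetype1}$ site must be unpaired, and together with $\site1$ it forms a pair of unpaired complementary sites that could be matched, contradicting maximality.

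For the converse I would assume every limiting site is paired and show $\config1$ is saturated. If it were not maximal, there would be two unpaired complementary sites $\site1$ and $\site2$, of types $\sitetype1$ and $\flip{\sitetype1}$ respectively. Since the two multiplicities are comparable, at least one of $\sitetype1$ and $\flip{\sitetype1}$ is limiting; whichever it is, the corresponding site among $\site1$ and $\site2$ is an unpaired limiting site, contradicting the hypothesis. (When the multiplicities are equal both types are limiting and either choice works.)

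The argument is short, so I do not expect a deep obstacle; the only thing to get right is the bookkeeping that ties the definition of \emph{limiting} to the existence of an available complementary partner. The essential point is that ``limiting'' is precisely the condition guaranteeing that an unpaired site of that type forces an unpaired complement (the forward direction), while the comparability of the two multiplicities guarantees that any addable edge has at least one limiting endpoint (the converse).
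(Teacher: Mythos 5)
Your proof is correct; note that the paper itself omits any proof of this claim (it says only ``We omit the proof''), so there is no argument of the paper to compare against. Your counting argument---the matching restricts to a bijection between paired sites of type $a$ and paired sites of type $\flip{a}$, so an unpaired site of a limiting type forces an unpaired complementary partner, while conversely any addable edge has at least one endpoint of a limiting type since of any complementary pair of multiplicities one must be at most the other---is the natural proof and soundly establishes both directions.
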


\newcommand{\limsites}{L}%
\aftertheoremindent
We omit the proof,
but this fact allows us to easily encode saturation.
To do so, we add the constraint
\begin{equation}
  \bigwedge_{s \in \limsites}
  \atleast{1}{\{\pPair{\site1}{\site2} : \site2 \in \compsites{\site1}\}},
\end{equation}
to the formula where $L$ is the set of limiting sites of $\therm1$.
Notice that each $\atleast{1}$ (read ``at least one'') constraint
is simply a clause.

\subsection{Encoding polymers}

To begin identifying polymers,
we convert site pairing to monomer binding using Boolean variables
like $\pBind{\monomer1}{\monomer2}$.
For now, assigning $\pBind{\monomer1}{\monomer2}$ true
will mean that monomers $\monomer1$ and $\monomer2$ are bound.
Note that binding is also symmetric,
so
$\pBind{\monomer1}{\monomer2}$ and $\pBind{\monomer2}{\monomer1}$ 
are the same variable.
To convert pairing to binding,
we add the constraint
\begin{equation}
  \pPair{\site1}{\site2} \to \pBind{\monomer1}{\monomer2}
\end{equation}
to the formula for each site $\site1$ in a monomer $\monomer1$
and each complementary site $\site2$ in a different monomer $\monomer2$.

We expand $\pBind{\cdot}{\cdot}$ to be transitive
by adding the constraint
\begin{equation}
  \pBind{\monomer1}{\monomer2} \land
  \pBind{\monomer1}{\monomer3} \to
  \pBind{\monomer2}{\monomer3}
\end{equation}
to the formula
for every three distinct monomers
$\monomer1$,
$\monomer2$,
$\monomer3$.
The addition of transitivity ensures that monomers
$\monomer1$ and $\monomer2$ are part of the same polymer iff
$\pBind{\monomer1}{\monomer2}$ is true.

\subsection{Maximizing the number of polymers}

The most involved part of the encoding is enforcing
the configuration to have at least $k$ polymers.
There are various ways to encode such a cardinality constraint,
and the quality of that encoding can make or break
our entire approach.
Several techniques have been developed to automatically
improve a low quality encoding~\cite{BVE,BVA}.
However, it typically pays to manually optimize it.
We implemented several encodings,
and describe here the one that resulted in the best performance,
which was inspired by the representative encoding~\cite{HeuleS15}.

Let $\monomer1_1, \monomer1_2, \ldots, \monomer1_n$
be an arbitrary ordering of the $n$ monomers of $\therm1$.
We call a monomer $\monomer1$ the \emph{representative}
of a polymer $\polymer1$ iff
every other monomer in $\polymer1$ follows $\monomer1$ in the ordering.
To determine the representatives,
we add the constraint
\begin{equation}
  \pBind{\monomer1}{\monomer2} \to \lnot \pRep{\monomer2}
\end{equation}
to the formula for each monomer $\monomer1$
that precedes each monomer $\monomer2$.

Now, we can use these Boolean variables
to enforce that a configuration
has at least $k$ polymers
by adding
\begin{equation}
  \atleast{k}{\{\pRep{\monomer1} : \monomer1 \in \therm1\}}
\end{equation}
to the formula.
To effectively encode this cardinality constraint,
we introduce new Boolean variables $\pSum{i}{j}$
for $1 \leq i \leq n$ and $1 \leq j \leq k$,
which will mean that among $\monomer1_1, \ldots, \monomer1_i$
there are at least $j$ representatives.
The encoding,
which is a simplification of~\cite{Sinz},
is shown below and also illustrated in Figure~\ref{fig:card}.
\begin{align}
  \lnot \pSum{i}{j} &\to  \lnot \pSum{i+1}{j+1} \\
  \lnot \pSum{i}{j} \land \pSum{i+1}{j} &\to \phantom{\lnot}\pRep{\monomer1_{i + 1}} \\
  \label{repfirst}
  \lnot \pSum{1}{2} \\
  \label{repenough}
  \phantom{\lnot}\pSum{n}{k}&.
\end{align}

\begin{figure}
\centering
\begin{tikzpicture}
\draw[step=0.5cm,black,very thin] (0,0) grid (8,3);
\draw[step=0.5cm,black,very thin] (0,-1) grid (8,-.5);

\node[draw=none, left] at (0,-0.75) {$\pRep{1}$};
\node[draw=none, left] at (0,0.75) {$\pSum{1}{2}$};
\node[draw=none, right] at (8,-0.75) {$\pRep{n}$};
\node[draw=none, right] at (8,2.75) {$\pSum{n}{k}$};

\node[draw=none] at (0.25,0.75) {{\bf 0}};
\node[draw=none] at (0.25,1.25) {$\times$};
\node[draw=none] at (0.25,1.75) {$\times$};
\node[draw=none] at (0.25,2.25) {$\times$};
\node[draw=none] at (0.25,2.75) {$\times$};
\node[draw=none] at (0.75,1.25) {{0}};
\node[draw=none] at (0.75,1.75) {$\times$};
\node[draw=none] at (0.75,2.25) {$\times$};
\node[draw=none] at (0.75,2.75) {$\times$};
\node[draw=none] at (1.25,1.75) {{0}};
\node[draw=none] at (1.25,2.25) {$\times$};
\node[draw=none] at (1.25,2.75) {$\times$};
\node[draw=none] at (1.75,2.25) {{0}};
\node[draw=none] at (1.75,2.75) {$\times$};
\node[draw=none] at (2.25,2.75) {{0}};
\node[draw=none] at (7.75,2.75) {{\bf 1}};
\node[draw=none] at (7.75,2.25) {$\times$};
\node[draw=none] at (7.75,1.75) {$\times$};
\node[draw=none] at (7.75,1.25) {$\times$};
\node[draw=none] at (7.75,0.75) {$\times$};
\node[draw=none] at (7.75,0.25) {$\times$};
\node[draw=none] at (7.25,2.25) {{1}};
\node[draw=none] at (7.25,1.75) {$\times$};
\node[draw=none] at (7.25,1.25) {$\times$};
\node[draw=none] at (7.25,0.75) {$\times$};
\node[draw=none] at (7.25,0.25) {$\times$};
\node[draw=none] at (6.75,1.75) {{1}};
\node[draw=none] at (6.75,1.25) {$\times$};
\node[draw=none] at (6.75,0.75) {$\times$};
\node[draw=none] at (6.75,0.25) {$\times$};
\node[draw=none] at (6.25,1.25) {{1}};
\node[draw=none] at (6.25,0.75) {$\times$};
\node[draw=none] at (6.25,0.25) {$\times$};
\node[draw=none] at (5.75,0.75) {{1}};
\node[draw=none] at (5.75,0.25) {$\times$};
\node[draw=none] at (5.25,0.25) {{1}};
\node[draw=none] at (3.25,1.25) {{0}};
\node[draw=none] at (3.75,1.25) {{1}};
\node[draw=none] at (3.75,-0.75) {{1}};

\end{tikzpicture}
\caption{
  An illustration of the $\pSum{\cdot}{\cdot}$ variables
  (above) and the $\pRep{\cdot}$ variables (below)
  for a $\pcount$ problem with $n = 16$ monomers and $k = 6$ polymers.
  Variables that are assigned to false/true are shown as 0/1.
  The zeros propagate up diagonally, while the ones propagate down diagonally.
  A 0,1 pair in a row (above) implies that the monomer at the position of the 1
  is a representative (below).
  The bold entries indicate the unit clauses,
  while $\times$ indicates out redundant variables.
}
\label{fig:card}
\end{figure}

\newcommand{\nsites}{n}%
Overall,
the size of the formula we generate is bounded
in terms of the number $\nsites$ of sites in a TBN.
$O(\nsites^2)$ clauses of size $O(\nsites)$
encode a saturated configuration.
$O(\nsites^3)$ clauses of size $O(1)$
encode polymers, and
$O(\nsites k)$ clauses of size $O(1)$
check for $k$ polymers.
This gives us a total of $O(\nsites^3)$ clauses
of size $O(\nsites)$
(since $k \leq n$).

\subsection{Example of fast solving}
\label{sec:tree}

In principle,
to decide whether a \tbn/ has a saturated configuration
with at least a certain number of polymers,
we can simply check each saturated configuration in turn.
In practice,
the number of saturated configurations is typically astronomically large%
---for example, recall that the TBN in Fig.~\ref{fig:circuit}
has about $10 ^ {61}$ saturated configurations.
Thus any naive approach that somehow enumerates the saturated configurations
to identify the ones with the greatest number of polymers is bound to fail.
Yet in under a minute,%
\footnote{%
  For exact timing on this and other examples,
  see our software package and the documentation of the examples in it%
  ~\cite{solver}.}
our $\psat$ solving approach finds $S(\therm1)$
for the example and inputs in Fig.~\ref{fig:circuit},
and also checks $\ptogether$,
a harder problem that we define in Section~\ref{stablyfree}.
It does this using the Glucose $\psat$ solver
built and run with default options%
~\cite{audemard2009predicting}.%

\newcommand{\graph}{G}%
\newcommand{\target}{m}%
\newcommand{\graphtbn}{\therm1_\graph}%
\newcommand{\graphm}{\monomer1_\graph}%
\newcommand{\vertexm}[1]{\monomer2_#1}%
\newcommand{\targetm}{\vertexm{\target}}%

\section{Beyond $\pcount$} \label{stablyfree}

We will want to compute
more complicated properties of stable configurations
than just the number of polymers.

\begin{definition}
  \newcommand{\resp}{}%
  In a TBN $\therm1$,
  a property \defterm{can} (\resp{}\defterm{can stably}) hold
  if it holds for some saturated (\resp{}stable) configuration.
  $(\therm1, x)$ is in $\stablyq$
  if the property $Q_x$ can stably hold in $\therm1$.
\end{definition}

\aftertheoremindent
Even when $Q_x$ is an easy property to check,
$\stablyq$ seems harder than $\cnp$.
For $\pcount$, a saturated configuration with $k$ polymers
served as a certificate.
For $\stablyq$, a stable configuration satisfying $Q_x$
might seem to serve as a certificate.
But checking that a configuration is saturated is easy,
while checking that it is stable seems hard.
So we look to a class larger than $\cnp$.

$\cpnp$ is the class of problems decided
by a deterministic polynomial-time Turing machine
with access to an oracle in $\cnp$.
The Turing machine can alternate
between computing and querying its oracle adaptively.
If we instead require the machine to make all of its queries
in parallel as a single group,
then we get the class $\cparallelnp$.%
\footnote{$\cparallelnp$ is also written $\cp^{\cnp[\log]}$}
This class is large enough to contain $\stablyq$ for many $Q$.

\begin{claim} \label{stablyqinparallelnp}
  If checking $Q_x$ on configuration $\config1$ is in $\cnp$,
  then $\stablyq$ is in $\cparallelnp$.
\end{claim}

\begin{proof}
  \newcommand{\resp}{}%

  \newcommand{\satwith}[1]{A ^ {#1}}%
  \newcommand{\propwith}[1]{B ^ {#1}}%

  Consider a TBN $\therm1$ and parameter $x$.
  Let $\satwith{k}$ (\resp{}and let $\propwith{k}$)
  mean that some saturated configuration of $\therm1$ has $k$ polymers
  (\resp{}and has $Q_x$).
  We know $\satwith{k}$ is in $\cnp$ by Claim~\ref{satconfignp}.
  And we see $\propwith{k}$ is in $\cnp$
  by the witness $(\config1, w)$,
  where $w$ is in turn a witness that $Q_x$ holds for $\config1$.
  So query $\satwith{k}$ and $\propwith{k}$ for each $k$
  from 1 to the number of monomers in $\therm1$.
  Then the largest $k$ where $\satwith{k}$ holds
  is $\stablepolycount(\therm1)$.
  And $\propwith{\stablepolycount(\therm1)}$
  holds iff $(\therm1, x)$ is in $\stablyq$.
  \qedhere
\end{proof}

\aftertheoremindent
In the rest of this section we define two important properties $Q_x$
and show that $\stablyq$ is a complete problem for $\cparallelnp$ in both cases.
This uses mathematical tools involving graphs,
which we introduce first.

\subsection{Graph problems for reductions}

\begin{figure}[t]
  \centering
  \includegraphics[width=0.76\textwidth]{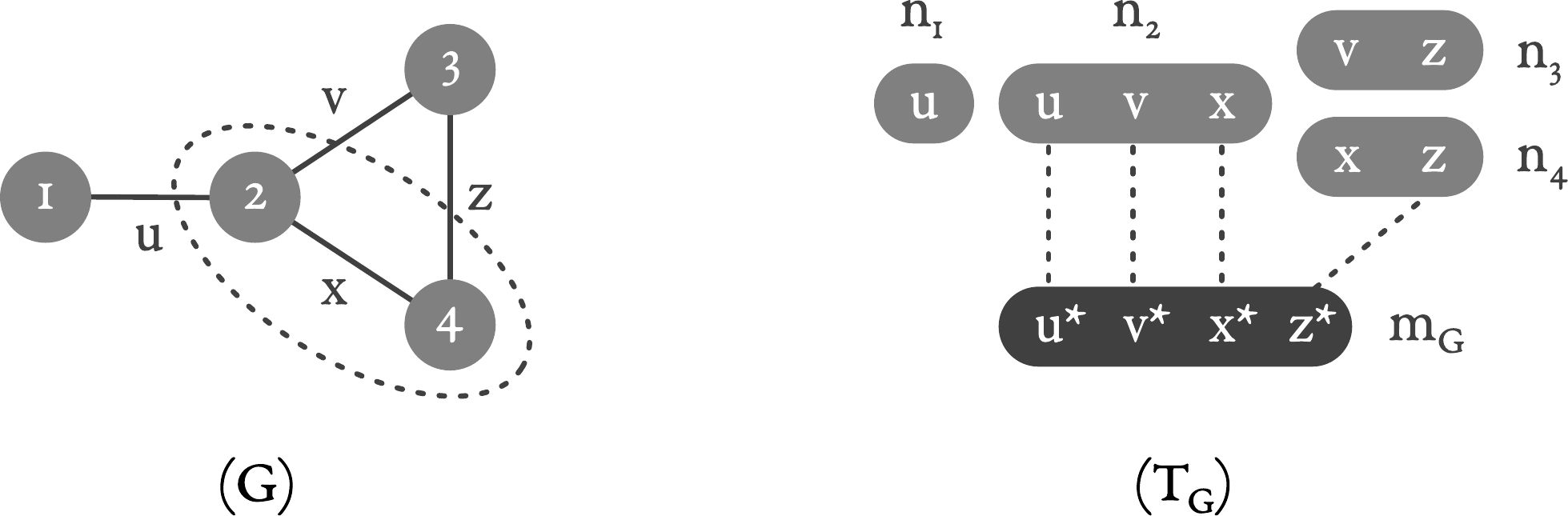}
  \caption{
    A graph $\graph$ and its transformation into a TBN $\graphtbn$.
    $\graphtbn$ has a monomer $\graphm$
    consisting of the edges of $\graph$,
    each complemented.
    For each vertex $v$ of $\graph$,
    $\graphtbn$ has a monomer
    $\vertexm{v}$
    consisting of the edges incident to $v$.
    Inside the dashed ellipse is a vertex cover of $\graph$.
    Outside the dashed ellipse is an independent set of $\graph$.
    The dashed lines indicate a corresponding
    saturated configuration of $\graphtbn$.
  }
  \label{fig:graphtbn}
\end{figure}

{
\newcommand{\comp}{\prob{Prob}}%

To see that $\stablyq$ is complete for $\cparallelnp$,
we show that there is a polynomial-time many-one reduction
from some complete problem $\comp$ to $\stablyq$,
denoted $\comp \reducesto \stablyq$.
Here $\comp$ will be a graph problem.
}

{
\newcommand{\sub}{X}%

Consider a subset $\sub$ of the vertices of a graph $\graph$.
Recall that
$\sub$ is an independent set if $\sub$ contains no neighbors.
$\sub$ is a vertex cover if $\sub$ contains an endpoint of each edge.
With this we define four key decision problems.
}

\begin{definition}
  \newcommand{\resp}{}%
  $(\graph, \target)$ is in $\pindrmember$ if
  some maximum independent set
  of the graph $\graph$
  (\resp{}does not) contain
  the vertex $\target$.
  $\pcoverrmember$ is analogous.
\end{definition}

\aftertheoremindent
Prior work shows that $\pcovermember$
is complete for $\cparallelnp$~\cite{Spa2005}.
We will later see that $\pindmember$
is as well.
And by a well-known duality between independent sets and vertex covers,
$\pindmember = \pcovernotmember$
and $\pindnotmember = \pcovermember$.
So these problems are all complete for $\cparallelnp$.

These graph problems connect to stable configurations of a TBN
by the construction in Fig.~\ref{fig:graphtbn}
and the following fact about it.

\begin{claim} \label{bindmember}
  $(\graph, \target)$ is in $\pindrmember$ iff
  $\graphm$ is not (is) bound to $\targetm$
  in some stable configuration of $\therm1_\graph$.
\end{claim}

\begin{proof}%
  \newcommand{\ind}{I}%
  \newcommand{\conf}{\config1}%
  \newcommand{\vertex}[1]{\ifcase#1\or v\or u\else\@ctrerr\fi}%
  Suppose some independent set $\ind$ of $\graph$ of size $k$
  does not (does) contain $\target$.
  Then binding $\graphm$ to $\vertexm{\vertex1}$
  for each $\vertex1$ not in $\ind$
  yields a saturated configuration of $\graphtbn$ with $k + 1$ polymers
  and $\graphm$ not bound (bound) to $\targetm$.

  Conversely,
  suppose some saturated configuration $\conf$ of $\graphtbn$
  has $k + 1$ polymers
  and $\graphm$ not bound (bound) to $\targetm$.
  Then collecting each vertex $\vertex1$
  where $\vertexm{\vertex1}$ is not bound to $\graphm$ in $\conf$
  yields an independent set of $\graph$ of size $k$
  not containing (containing) $\target$.
  \qed
\end{proof}

\subsection{$\ptogether$}

In a \tbn/ that implements a computation,
whether two specific monomers are together in the same polymer 
can be interpreted as a bit of output.
For example, this is the notion of output in the circuit construction shown in Fig.~\ref{fig:circuit}~
\cite{chalk2018thermodynamically}.
Note that this representation is experimentally meaningful when
usual fluorescence readout methods are used to interrogate system state:
A ``fluorophore'' moiety on one monomer emits signal measured by a spectrophotometer if not quenched by a ``quencher'' moiety on another monomer.
Thus the decrease in measured fluorescence corresponds to the fluorophore and quencher-labelled monomers being together in the same monomer. 

\begin{definition}
  Two monomers are \defterm{together} in a configuration
  if they are in the same polymer.
  $(\therm1, \monomer1_1, \monomer1_2)$ is in $\ptogether$ if
  $\monomer1_1$ and $\monomer1_2$ can be stably together in $\therm1$.
\end{definition}

\begin{claim} \label{togetherinparallelnp}
  $\ptogether$ is in $\cparallelnp$.
\end{claim}

\begin{proof}
  Checking that monomers are together is in $\cnp$,
  so Claim~\ref{stablyqinparallelnp} applies.
  \qedhere
\end{proof}

\begin{claim}
  $\pindnotmember \reducesto \ptogether$.
\end{claim}

\begin{proof}
  Consider the following statements.
  \begin{enumerate}
    \item $(\graph, \target)$ is in $\pindnotmember$.
    \item $\graphm$ is bound to $\targetm$
      in some stable configuration of $\therm1_\graph$.
    \item $(\therm1_\graph, \graphm, \targetm)$ is in $\ptogether$.
  \end{enumerate}
  (1) iff (2) by Claim~\ref{bindmember}.
  (2) iff (3) by definition.
  \qed
\end{proof}

\subsection{\pfree}

\aftertheoremindent
In typical strand displacement systems,
signals are carried by ``signal strands'' from one complex to another,
and signal strands are activated or inactivated
by being free or bound within a larger complex%
~\cite{zhang2011dynamic}.
Recall the example AND gate shown in Fig.~\ref{fig:abcd}:
the designated output monomer can be free in a stable configuration
iff both designated input monomers are present.
Predicting such behavior is important
to verifying the thermodynamic properties of strand displacement systems%
~\cite{thachuk2015leakless}.
Notice that $\pfree$ is not some kind of complement of $\ptogether$.
For example, in some cases a monomer can be stably free
and yet also can be stably together with other monomers.

\begin{definition}
  A monomer is \defterm{free} in a configuration
  if no other monomer binds to it.
  $(\therm1, \monomer1)$ is in $\pfree$ if
  $\monomer1$ can be stably free in $\therm1$.
\end{definition}

\begin{claim}
  $\pfree$ is in $\cparallelnp$.
  \label{freeinparallelnp}
\end{claim}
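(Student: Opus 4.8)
The plan is to exhibit a deterministic polynomial-time machine that decides $\pfree$ with a single non-adaptive (parallel) batch of $\cnp$ queries. The difficulty flagged in the text is that stability is a global maximization condition: a saturated configuration is stable exactly when its number of polymers equals $\stablepolycount(\therm1)$, and this value is not known in advance, so we cannot simply hand a verifier a witness. I would get around this by computing $\stablepolycount(\therm1)$ and testing the freeness condition \emph{at the same time}, using two monotone families of $\cnp$ queries indexed by the candidate polymer count.

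First I would recall that the decision version of $\pcount$ is in $\cnp$, as already established: for each $k$, whether some saturated configuration of $\therm1$ has at least $k$ polymers is witnessed by exhibiting such a configuration $\config1$, which can be checked in polynomial time. Next I would introduce a second $\cnp$ predicate with a more constrained witness: on input $(\therm1, \monomer1, k)$, decide whether some saturated configuration has at least $k$ polymers \emph{and} has $\monomer1$ free. This is again in $\cnp$, since from a candidate $\config1$ one can verify saturation, count its polymers, and check that $\monomer1$ is unbound, all in polynomial time. Now let $n$ be the number of monomers, so that $1 \le \stablepolycount(\therm1) \le n$. In one parallel round I would issue $2n$ queries: for each $k \in \{1, \ldots, n\}$ the query ``$(\therm1, k) \in \pcount$'', and for each $k$ the augmented query on $(\therm1, \monomer1, k)$. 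None of these depends on the answer to another, so the batch is genuinely non-adaptive and fits the definition of $\cparallelnp$. From the first family, whose yes-answers form a prefix by monotonicity, I read off $S = \stablepolycount(\therm1)$ as the largest $k$ answered yes. Because $S$ is the maximum, a saturated configuration has at least $S$ polymers iff it has exactly $S$, i.e. iff it is stable; hence the augmented query at $k = S$ answers yes exactly when some stable configuration leaves $\monomer1$ free, which is precisely the condition $(\therm1, \monomer1) \in \pfree$. The machine therefore outputs the $k=S$ bit of the second family.

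The one place where the parallel-access restriction actually bites is the interaction between determining $S$ and testing freeness: the naive algorithm would first compute $S$ and only afterward pose the freeness query at level $S$, which is an adaptive two-round protocol and would only place $\pfree$ in $\cpnp$. The key realization is that the freeness queries already form a family indexed by \emph{every} possible value of $S$, so I can pose all of them up front and select the relevant answer after the fact. The only remaining thing to verify is that selecting the $k = S$ answer indeed corresponds to stable-freeness, and this rests solely on the identity ``saturated with at least $S$ polymers $=$ stable'', which is immediate from $S$ being the maximal polymer count. I expect this bookkeeping to be routine; the substantive content is recognizing that pinning down the threshold $S$ and probing freeness can be folded into one parallel family rather than executed in sequence.
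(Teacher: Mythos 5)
Your proposal is correct and matches the paper's proof in all essentials: both issue the same single parallel batch of $2n$ queries to two $\cnp$ predicates (saturated with at least $k$ polymers, and the same with $\monomer1$ free, for $k = 1, \ldots, n$) and exploit the monotonicity of the answers in $k$. The only difference is cosmetic post-processing --- you locate the threshold $S$ and read the freeness answer at $k = S$, while the paper equivalently accepts iff every yes to the first family is matched by a yes to the second.
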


\begin{proof}
  Checking that a monomer is free is in $\cnp$,
  so Claim~\ref{stablyqinparallelnp} applies.
  \qedhere
\end{proof}

\begin{claim}
  $\pindmember \reducesto \pfree$.
\end{claim}

\begin{proof}
  Consider the following statements.
  \begin{enumerate}
    \item $(\graph, \target)$ is in $\pindmember$.
    \item $\targetm$ is not bound to $\graphm$
      in some stable configuration of $\therm1_\graph$.
    \item $\targetm$ is free in some stable configuration of $\therm1_\graph$.
    \item $(\therm1_\graph, \targetm)$ is in $\pfree$.
  \end{enumerate}
  (1) iff~(2) by Claim~\ref{bindmember}.
  (2) iff~(3)
  since $\targetm$ is compatible only with $\graphm$.
  (3) iff~(4) by definition.
  \qed
\end{proof}

\aftertheoremindent
Our last step is now to show that $\pindmember$ is complete.
So far we only know that $\pindnotmember$ is.%
\footnote{
  If these two problems were complements,
  then we would be done:
  \newcommand{\classx}{\class{C}}%
  since $\cparallelnp = \mathrm{co}\cparallelnp$,
  $A$ is complete for $\cparallelnp$ iff
  $\overline{A}$ is complete for $\mathrm{co}\cparallelnp$.
  However, the two problems are not complements. 
  For example,
  the graph consisting of a single edge between vertices $u$ and $v$
  has $v$ in the maximum independent set $\{v\}$
  and \emph{not} in the maximum independent set $\{u\}$.
}

\begin{claim} \label{notmember}
  $\pindnotmember \reducesto \pindmember$.
\end{claim}

\begin{proof}
  Consider a graph $\graph$ and a target vertex $\target$.
  Form $\graph'$ by first splitting each vertex.
  This means replace each vertex $u$
  with $\dot u$ and $\ddot u$
  \newcommand{\edge}[2]{#1 {-} #2}%
  and replace each edge $\edge{u}{v}$ with
  $\edge{\dot  u}{\dot  v}$,
  $\edge{\ddot u}{\ddot v}$,
  $\edge{\dot  u}{\ddot v}$,
  $\edge{\ddot u}{\dot  v}$.
  This way $\dot u$ and $\ddot u$ have the same neighbors.
  Then connect a new vertex $\target'$ to $\dot \target$ and $\ddot \target$.
  We will show that $(\graph, \target)$ is in $\pindnotmember$
  iff $(\graph', \target')$ is in $\pindmember$.

  \newcommand{\maximal}[1]{[ #1 ]}%
  \newcommand{\maximaloriginal}{\maximal{\graph}}%
  \newcommand{\maximalsplit}{\maximal{\graph'}}%

  Let $[H]$ denote the maxi\emph{mal} independent sets of a graph $H$.
  We claim that the mapping
  $f : \maximaloriginal \to \maximalsplit$ is a bijection
  when defined by
  \begin{align}
    f(I)
    = \begin{cases}
      \dot I \cup \ddot I &\text{if } \target \in I \\
      \dot I \cup \ddot I \cup \{\target'\} &\text{else.}
    \end{cases}
  \end{align}
  To see that $f$ is onto,
  consider $I'$ in $\maximalsplit$.
  If $\dot u$ is in $I'$,
  then by independence,
  no neighbor of $\dot u$ is in $I'$.
  By construction, these are also the neighbors of $\ddot u$.
  So by maximality, $\ddot u$ is in $I'$.
  So $I' \setminus \{ \target' \} = \dot I \cup \ddot I$ for some $I$.
  This way $\target$ is in $I$
  iff $\dot \target$ and $\ddot \target$ are in $I'$.
  By independence and maximality,
  this is iff $\target'$ is not in $I'$.
  So $I' = f(I)$.
  If we could add some $u$ to $I$,
  then we could add $\dot u$ to $I'$,
  but $I'$ is maximal.
  So $I$ is in $\maximaloriginal$.

  Now notice that $\size{f(I)}$ is $2 \size{I}$
  or $2 \size{I} + 1$.
  So $\size{I} < \size{J}$
  if and only if $\size{f(I)} < \size{f(J)}$.
  So $I$ is a maxi\emph{mum} independent set
  if and only if $f(I)$ is.
  And $\target$ is not in $I$ if and only if
  $\target'$ is in $f(I)$.
  \qedhere
\end{proof}

\aftertheoremindent
Intuitively,
splitting the graph in two
at the start of the proof
makes it so that adding the one additional node
preserves the independent sets,
which is necessary for the inequality at the end to hold.

Besides $\ptogether$ and $\pfree$,
many other interesting properties of stable configurations
are probably also $\ccomplete{\cparallelnp}$ to decide.

\section{$\pcount$ as a black box} \label{solverblackbox}

\newcommand{\maxpolys}[2][]{\stablepolycount_{#1}(#2)}%
\newcommand{\polycount}[1]{\size{#1}}%

Since $\stablyq$ is in $\cparallelnp$ (for any $Q$ in $\cnp$), we know we can in principle use any NP-complete problem as a black box to decide $\stablyq$.
In this section, we show that for $\ptogether$ and $\pfree$ in particular, we can use $\pcount$ as a block box in a very simple way.
The reductions below solve $\ptogether$ and $\pfree$ with two calls to $\maxpolys{\therm1}$.
Since our accompanying software package already computes $\maxpolys{\therm1}$ as described in Section~\ref{solver}, we additionally implement these reductions to compute $\ptogether$ and $\pfree$~\cite{solver}.

\subsection{Computing $\pfree$}

\newcommand{\maxfreepolys}[1]{\maxpolys[\monomer1\,\mathrm{free}]{#1}}%

\begin{definition}
  $\maxfreepolys{\therm1}$
  is the most polymers in any saturated configuration of $\therm1$
  with monomer $\monomer1$ free.
\end{definition}

\aftertheoremindent
Now $(\therm1, \monomer1)$ being in $\pfree$
is equivalent to
$\maxfreepolys{\therm1} = \maxpolys{\therm1}$.
So we just need a way to compute $\maxfreepolys{\therm1}$.

\begin{claim} \label{freetest}
  If monomer $\monomer1$ can be free in $\therm1$,
  then
  $\maxfreepolys{\therm1}
  =
  \stablepolycount(\therm1 \setminus \{\monomer1\}) + 1$.
\end{claim}

\begin{proof}
  \newcommand{\removed}{\therm1'}%
  Suppose $\monomer1$ can be free in $\therm1$,
  and let $\removed = \therm1 \setminus \multiset{\monomer1}$.
  Among saturated configurations of $\therm1$ with $\monomer1$ free,
  let $\config1$ be one with the most polymers.
  Let $\config1'$ be $\config1$ with $\monomer1$ removed.
  Then $\config1'$
  is a saturated configuration of $\removed$.
  So
  \begin{align}
    \maxfreepolys{\therm1}
    = \size{\config1}
    = \size{\config1'} + 1
    \leq \maxpolys{\removed} + 1.
  \end{align}
  Conversely, let $\config1'$ be a stable configuration of $\removed$.
  Let $\config1$ be $\config1'$ with $\monomer1$ added
  as a polymer.
  $\monomer1$ can be free
  and $\config1'$ is already saturated,
  so no additional bonds can form in $\config1$.
  So $\config1$ is a saturated configuration of $\therm1$
  with $\monomer1$ free.
  So
  \begin{align}
    \maxfreepolys{\therm1}
    \geq \size{\config1}
    = \size{\config1'} + 1
    = \maxpolys{\removed} + 1.
  \end{align}
  So overall, $\maxfreepolys{\therm1} = \maxpolys{\removed} + 1$.
  \qedhere
\end{proof}

\aftertheoremindent
And $\monomer1$ can be free in $\therm1$
if none of its sites are limiting,
which is easy to check.
So we can easily compute $\pfree$ by comparing $\maxpolys{\therm1}$ and $\maxpolys{\therm1 \setminus \{\monomer1\}}$.

\subsection{Computing $\ptogether$}

We can now use $\pfree$ in turn as a black box to decide $\ptogether$.

{
\newcommand{\tbnbase}{\therm1}%
\newcommand{\ma}{\monomer1_1}%
\newcommand{\mb}{\monomer1_2}%
\newcommand{\mf}{\monomer2}%
\newcommand{\mfx}{\flip{\mf}}%

\begin{claim}
  The following are equivalent.
  \begin{enumerate}
    \item $(\tbnbase \uplus \{\ma, \mb\}, \ma, \mb)$
      is in $\ptogether$.
    \item $(\tbnbase \uplus \{\ma', \mb', \mfx, \mf\}, \mf)$
      is in $\pfree$.
  \end{enumerate}
  Here $\ma' = \ma \uplus \{x\}$,
  $\mb' = \mb \uplus \{y\}$,
  $\mfx = \{\flip{x}, \flip{y}\}$,
  $\mf = \{x, y\}$,
  $x$ and $y$ are new site types,
  and $\uplus$ indicates the multiset sum.
\end{claim}

\begin{proof}
  \newcommand{\map}[1]{#1 _ {+}}%
  \newcommand{\unmap}[1]{#1 _ {-}}%
  \newcommand{\tbnorig}{\unmap{\therm2}}%
  \newcommand{\tbnadd}{\map{\therm2}}%
  Let $\tbnorig = \tbnbase \uplus \{\ma, \mb\}$
  and $\tbnadd = \tbnbase \uplus \{\ma', \mb', \mfx, \mf\}$.

  From a saturated configuration $\config1$ of $\tbnorig$,
  form the saturated configuration $\map{\config1}$ of $\tbnadd$
  as follows.
  Replace $\ma$ with $\ma'$ and $\mb$ with $\mb'$.
  If $\ma$ and $\mb$ are together,
  bind $\ma'$ and $\mb'$ to $\mfx$ and add the polymer $\{\mf\}$.
  Otherwise add the polymer $\{\mfx, \mf\}$.
  This way
  \begin{equation}
    \size{\map{\config1}} = \size{\config1} + 1
    .
  \end{equation}

  From a saturated configuration $\config2$ of $\tbnadd$,
  form the saturated configuration $\unmap{\config2}$ of $\tbnorig$
  as follows.
  Replace $\ma'$ with $\ma$ and $\mb'$ with $\mb$.
  Remove $\mfx$ and $\mf$.
  Since $\config2$ is saturated,
  $\mfx$ is bound to some monomer,
  so
  \begin{equation}
    \size{\unmap{\config2}} \geq \size{\config2} - 1
    .
  \end{equation}

  To see that (1) implies (2),
  suppose $\ma$ and $\mb$ are together
  in a stable configuration $\config1$ of $\tbnorig$.
  We claim that $\map{\config1}$ is a stable configuration of $\tbnadd$ with $\mf$ free.
  To see that $\map{\config1}$ is stable,
  consider a saturated configuration $\config2$ of $\tbnadd$.
  Since $\config1$ is stable, $\size{\config1}  \geq \size{\unmap{\config2}}$, and we have
  \begin{equation}
    \size{\map{\config1}}
    = \size{\config1} + 1
    \geq \size{\unmap{\config2}} + 1
    \geq \size{\config2}
    .
  \end{equation}

  To see that (2) implies (1),
  suppose $\mf$ is free
  in a stable configuration $\config2$ of $\tbnadd$.
  We claim that $\unmap{\config2}$ is a stable configuration of $\tbnorig$ with $\ma$ and $\mb$ together. 
  To see that $\unmap{\config2}$ is stable,
  consider a saturated configuration $\config1$ of $\tbnorig$.
  We have
  \begin{equation}
    \size{\unmap{\config2}}
    \geq \size{\config2} - 1
    \geq \size{\map{\config1}} - 1
    = \size{\config1}
    .
  \end{equation}
  Now in $\config2$, if $\ma'$ and $\mb'$ were only held together by $\mfx$,
  then we could bind $\mfx$ instead to $\mf$
  to increase the number of polymers by one,
  which would contradict that $\config2$ is stable.
  So $\ma$ and $\mb$ are together in $\unmap{\config2}$.
  \qed
\end{proof}
}

\section{Conclusion} \label{conclusion}

The \tbn/ perspective complements kinetic models of chemical computation with a substrate-independent model of thermodynamic equilibrium based on counting bonds and configurational entropy.
We proved tight bounds on the computational complexity of $\pcount$,
the problem at the core of any questions about the behavior of TBNs,
as well as $\pfree$ and $\ptogether$,
problems motivated by applying the model to verify different chemical systems.
Although we prove these problems are worst-case intractable,
we developed algorithms and implementations
for solvers effective in many cases in practice.

An important future direction
is to consider the TBN model in the ``bulk'' experimental setting,
where monomer types are given in relative concentrations.
In this regime we need to consider larger complexes involving multiple copies of individual monomers types.
Note that the size of polymers in a stable configuration can be bounded as a function of the number of site and monomer types, but the bound is exponential~\cite{tbn}. 
Thus computationally, the bulk setting is more challenging.

Restricting TBNs in certain ways could be a useful design strategy for engineering systems that are powerful yet easy to verify.
There may be interesting classes of TBNs for which the problems we consider are provably easy.
Such classes might arise
from imposing a specific global or local property on the TBN.
More generally,
it is likely that even more sophisticated SAT reductions than ours
could effectively handle more complex problems of interest.

Understanding the nature of composition of TBN modules 
could achieve complex yet easy to verify behavior.
However, it is not clear in general how the stable configurations of $\therm1_1 \uplus \therm1_2$ relate to those of $\therm1_1$ and $\therm1_2$.
It appears hard to ``isolate'' distinct functional parts in TBNs because stability is a global property.

Making bonds stronger and decreasing the effective concentration
makes stable configurations arbitrarily preferred to unstable ones~\cite{tbn}.
However, in practice we may need to consider unstable configurations as well and limit which ones are undesirable.
If there are too many ``close to stable'' configurations, they may end up occurring with non-negligible probability.
One approach involves modifying TBNs to increase the ``distance to stability'' of undesirable configurations~\cite{tbn}. 
For example, in the case that monomer $\monomer1$ is not supposed to be free, constructions with a large $\maxpolys{\therm1} - \maxfreepolys{\therm1}$ have a large penalty to spuriously releasing $\monomer1$.
Our package can easily be extended to compute $\maxpolys{\therm1} - \maxfreepolys{\therm1}$ to verify the distance to stability.
Since the probability of undesired behavior depends also on the number of configurations, 
an important question is to count how many configurations with a given level of stability
have a given bad property.
How hard is this counting problem exactly?
For instance, is it in $\# \mathrm{P}$?

Computing with a \tbn/ as presented
involves only looking at the existence of certain stable configurations.
Instead, we can imagine unifying this thermodynamic equilibrium perspective
with a kinetic perspective entirely within the TBN model---by asking about which configurations can be reached without traversing thermodynamically unfavorable configurations~\cite{breikprogramming}.
The computational difficulty of this problem,
and what algorithms solve it,
remain to be answered.

\bigskip
\noindent\textbf{Acknowledgements.}
KB and DS were supported by NSF grants CCF-1618895 and CCF-1652824.
We thank Lakshmi Prakash for developing the original Mathematica interface to the SAT solver and for helpful discussions.
We also thank Cameron Chalk for help with the example in Figure~\ref{fig:circuit}.
We are also grateful to anonymous reviewers
for suggesting important improvements.

\bibliographystyle{plain}
\bibliography{SAT,chemcomp,complexity}

\end{document}